\documentclass[11pt,letterpaper]{article} 
\usepackage{amsmath}
\usepackage{amssymb}
\usepackage{bbm}

    \usepackage[margin=1in]{geometry}
    \usepackage[utf8]{inputenc}
    \usepackage{amsthm}
    \usepackage{thm-restate}
    \newtheorem{theorem}{Theorem}
    \newtheorem{lemma}[theorem]{Lemma}
    
    \newtheorem{proposition}[theorem]{Proposition}

    \usepackage[pdfpagelabels,bookmarks=false]{hyperref}
    \hypersetup{colorlinks, linkcolor=darkblue, citecolor=black, urlcolor=darkblue}
    \usepackage[capitalize, nameinlink]{cleveref}
    
    \crefname{theorem}{Theorem}{Theorems}
    \Crefname{lemma}{Lemma}{Lemmas}
    \Crefname{proposition}{Proposition}{Propositions}
    \Crefname{claim}{Claim}{Claims}
    \Crefname{fact}{Fact}{Facts}
    \Crefname{remark}{Remark}{Remarks}
    \Crefname{observation}{Observation}{Observations}
    \Crefname{line}{Line}{Lines}
    \crefalias{AlgoLine}{line}
    \Crefname{algocf}{Algorithm}{Algorithms}
    
    \usepackage{times}
    \usepackage[table]{xcolor}

\usepackage{url}
\usepackage{array}
\usepackage{setspace}

\usepackage{lmodern}
\usepackage{mathtools}
\usepackage{bm}
\usepackage{xspace}

\usepackage{xfrac}

\usepackage{tikz}
\usetikzlibrary{calc}
\usetikzlibrary{shapes.geometric}
\usetikzlibrary{arrows}
\usetikzlibrary{decorations.pathreplacing, decorations.pathmorphing}

\usepackage{float}
\usepackage{graphicx}

\usepackage{algpseudocodex}
\usepackage{algorithm}

\newcommand*\smallcircled[1]{\tikz[baseline=(char.base)]{
            \node[shape=circle,draw,inner sep=1pt, minimum size=5pt] (char) {#1};}}

\definecolor{darkblue}{rgb}{0,0,0.38}
\definecolor{darkred}{rgb}{0.8,0,0}
\definecolor{darkgreen}{rgb}{0.1,0.35,0}

\renewcommand{\sfrac}[2]{{\textstyle\frac{#1}{#2}}}

\renewcommand{\epsilon}{\varepsilon}

\newcommand{\LP}{\ensuremath{\mathrm{LP}}}

\newcommand{\ind}{\ensuremath{\mathrm{ind}}}

\def\cupp{\stackrel{.}{\cup}}

\newcommand{\Fscr}{\mathcal{F}}

\newcommand{\Lscr}{\mathcal{L}}

\usepackage{todonotes}

\newcommand{\slack}[1]{\ensuremath{\mathrm{slack}\bigl( #1 \bigr)}}
\newcommand{\cost}[1]{c\bigl( #1 \bigr)}
\newcommand{\lb}[1]{\ell\bigl( #1 \bigr)}
\newcommand{\lbonly}{\ell}
\newcommand{\potPhi}[1]{\Phi\bigl( #1 \bigr)}

\usepackage{natbib}
\renewcommand{\cite}[1]{\citeauthor{#1} [\citeyear{#1}]}

\newcommand{\bib}[3]{\bibitem[\protect\citeauthoryear{#1}{#2}]{#3}}

\begin{document}

\title{Better approximation guarantee for Asymmetric TSP}
\author{Jens Vygen}
\date{\small Research Institute for Discrete Mathematics and Hausdorff Center for Mathematics, University of Bonn}
\maketitle

\begin{abstract}
We improve the approximation ratio for the \textsc{Asymmetric TSP} to less than 15.
We also obtain improved ratios for the special case of unweighted digraphs
and the generalization where we ask for a minimum-cost tour with given (distinct) endpoints.
Moreover, we prove better upper bounds on the integrality ratios of the natural LP relaxations.
\end{abstract}

\section{Introduction}

The \textsc{Asymmetric Traveling Salesman Problem} (ATSP) is one of the best-known combinatorial optimization problems,
with obvious applications in vehicle routing and beyond. 
Given a strongly connected digraph $G=(V,E)$ with non-negative edge costs $c:E\to\mathbb{R}_{\ge 0}$,
we ask for a \emph{tour} (i.e., a multi-edge set $F$ such that $(V,F)$ is connected and Eulerian) of minimum total cost.

Less than ten years ago, \cite{SveTV20} found the first constant-factor approximation algorithm. 
Their approximation ratio of 506 was improved to $22+\epsilon$ by \cite{TraV22} and to $17+\epsilon$ by \cite{TraV25}
(for any $\epsilon>0$). 
Here we improve the approximation ratio again. 
Like the previous works, we also compare to the standard linear programming relaxation:
\begin{equation}
\label{eq:atsplp}
\begin{array}{@{}lrcll}
\multicolumn{2}{@{}l}{\min \ \sum_{e\in E}\, c(e) x_e} & & & \\[1mm]
\mbox{subject to} & x(\delta(U)) &\ge& 2 & (\emptyset\not=U\subset V) \\
& x(\delta^+(v)) &=& x(\delta^-(v)) & (v\in V) \\
& x_e & \ge & 0 & (e\in E),
\end{array}
\end{equation}
whose integral solutions correspond to tours.
Hence we also improve the upper bound on the integrality ratio of this linear program. 
More precisely, we prove:

\begin{theorem}\label{thm:mainatsp}
Let $\alpha$ be a constant with $\alpha > 3 + 2\sqrt{2}$.
Then there is a polynomial-time algorithm that computes, 
for any given strongly connected digraph $G=(V,E)$ with edge costs $c:E\to\mathbb{R}_{\ge 0}$,
a tour with cost at most $(9+\alpha)\cdot \LP$, where $\LP$ denotes the value of \eqref{eq:atsplp}.
If $c(e)=1$ for all $e\in E$, then the tour has at most $\alpha\cdot\LP$ edges.
\end{theorem}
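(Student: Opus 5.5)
We follow the framework of \cite{SveTV20} and \cite{TraV22,TraV25}, separating a reduction from a core subproblem. The reduction turns general ATSP into \emph{vertebrate pairs} (laminarly-weighted instances with a backbone) at the cost of an additive constant. The core is an algorithm for \emph{Subtour Partition Cover} (also called Subtour Cover) with improved parameters.

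The first step is the reduction to laminarly-weighted instances. We solve \eqref{eq:atsplp} and take an optimal vertex solution $x$. From its dual we obtain a laminar family $\Lscr$ and values $y_U\ge 0$ such that, after cost modification, every edge $e$ has cost $c(e)=\sum_{U\in\Lscr:\, e\in\delta(U)} y_U$ (plus a vertex part that we can absorb). Then, following \cite{TraV22}, we reduce to instances with a \emph{backbone}, meaning a cycle through all ``large'' sets. The reduction of \cite{SveTV20} as sharpened by \cite{TraV25} converts a $\rho$-approximation for vertebrate pairs into one with ratio $\rho$ plus a small additive constant. The plan is to show that the $9$ in $9+\alpha$ is exactly this overhead: $2$ from recursion on tight sets, plus the backbone/induction contributions.

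The second step is the key subroutine. We use the dynamic-programming-free local search of \cite{TraV25}, but with a refined potential function $\potPhi{\cdot}$ and with lower bounds $\lb{\cdot}$ on components. The aim is to show that an $\alpha$-approximate Subtour Cover exists and can be found whenever $\alpha>3+2\sqrt2$. Concretely, we maintain a partial Eulerian multigraph $H$ together with a lower-bound function $\lbonly$ satisfying $\lb{V}\le \LP$. The invariant is $\cost{H}\le \kappa\cdot\lb{\cdot}$ for components, combined with a slack bound $\slack{\cdot}$. In each iteration we either merge components using an $x$-path found by a flow argument, or we charge a new cycle to $\lbonly$. The constant $3+2\sqrt2=(1+\sqrt2)^2$ should arise from balancing two parameters. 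The first, $\beta$, governs when we accept a merge. The second, $\kappa$, bounds the cost per unit of lower bound. The two resulting constraints take the form $\alpha\ge \kappa+\kappa/\beta+\dots$, and optimizing them gives $\alpha=(1+\sqrt2)^2$. To land strictly above this value we take a slack $\epsilon$, which also ensures polynomial running time through a geometric decrease of the potential.

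The unweighted statement follows directly from the core subroutine. When $c\equiv 1$ we can take the laminar family to be empty and put weight only on vertices. The backbone reduction is then unnecessary, so the additive $9$ disappears, and applying the Subtour Cover algorithm to a single cycle cover yields $\alpha\cdot\LP$ edges. The main obstacle is the potential-function analysis in the second step. We must prove that every local improvement either merges components or certifies a lower bound, while losing only a factor $\alpha$ and not $\alpha+1$. The first approach to try is the amortization of \cite{TraV25}, extended so that the cost of a merged path is charged partly to the slack of the old component and partly to the new one, with split ratio $1:\sqrt2$.
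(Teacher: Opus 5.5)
\textbf{There is a genuine gap: your plan does not contain the argument that produces the threshold $3+2\sqrt2$.} The constants are reverse-engineered (``should arise from balancing two parameters'', ``split ratio $1:\sqrt2$'') rather than derived. Several ingredients you name are also not what makes it work. The subtour cover is not improved: Theorem~\ref{thm:subtour_cover} is used as is. What matters is its in-degree property $|F\cap\delta^-(v)|\le 2\,x(\delta^-(v))$, which your plan never mentions. There are no $x$-paths from a flow argument either.

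The missing idea is a new test in Step~2 of Svensson's algorithm. Decompose the subtour cover $F$ into circuits, and for each circuit $C$ with $i=\ind(C)\ge 1$ check whether $\cost{E(C)}>\frac{\gamma(1-\epsilon)}{\gamma(1-\epsilon)-1}\sum_{v\in V(C)\cap\tilde W_i}2y_v$.
\begin{itemize}
\item If yes, then $\tilde H[\tilde W_i]\cupp E(C)$ has slack larger than $\slack{\tilde W_i,\tilde H}+\epsilon\,\lb{V(C)\setminus\tilde W_i}$. A variant of the better-initialization lemma, with $\lb{V(D)\setminus\tilde W_{\ind(D)}}$ in place of $\lb{V(D)}$ (Lemma~\ref{lemma:improved_initialization}), then yields a significantly better initialization and a restart.
\item If no circuit passes the test, sum over the circuits of index $i$. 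Combining the in-degree bound with $\lbonly(v)\ge\gamma\, x(\delta^-(v))\,2y_v$ gives $\cost{F_i}\le\frac{2(1-\epsilon)}{\gamma(1-\epsilon)-1}\lb{\tilde W_i}$.
\item One then shows that each $F_i$ is added to $H$ at most once. A later circuit of index $i$ leaving the old $Z$ that is cheap enough would have been added to $X$. A more expensive one would by itself have slack exceeding $\slack{\tilde W_i,\tilde H}+\epsilon\,\lb{V(C)}$, which Step~2 also tests.
\end{itemize}
Initialization plus $X$-edges cost at most $\lb{V\setminus V(B)}\le(\gamma+\epsilon)\sum_v 2y_v$. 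So the total is essentially $\gamma\bigl(1+\frac{2}{\gamma-1}\bigr)$ times $\sum 2y_v$ (or times $\LP$ in the graph case). This is a function of the single budget parameter $\gamma$, minimized at $\gamma=1+\sqrt2$ with value $(1+\sqrt2)^2$. Without the circuit test and the in-degree counting, nothing in your outline bounds the $F$-edges better than earlier work did.

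\textbf{The bookkeeping of the constants is also off.} The $9$ is $8+1$. The reduction (Theorem~\ref{thm:reductiontovertebratepairs}) turns a vertebrate-pair algorithm with $c(H)\le 2\cdot\LP+\eta\sum_{v\in V\setminus V(B)}2y_v$ into an $(8+\eta)$-approximation. Here $\eta=\alpha+1$, and the extra $1$ is the $\sum 2y_v$ term in the cost bound of the final subtour cover, whose components touch the backbone. So in the weighted case you \emph{do} lose $\alpha+1$. Your stated obstacle (``losing only a factor $\alpha$ and not $\alpha+1$'') aims at the wrong target, and there is no ``$2$ from recursion on tight sets''.

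The normalization $\lb{V}\le\LP$ is not the right one either: in the graph case $\lbonly(V)=\gamma\cdot\LP$ with $\gamma\approx 1+\sqrt2$. Polynomiality comes from the potential $\potPhi{\cdot}$ increasing by at least $\min\lbonly(v)^p$ per restart while staying bounded, not from a geometric decrease.

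For unit costs, a single cycle cover followed by one subtour cover does not give a tour. The bound $\alpha\cdot\LP$ comes from running the full modified algorithm with $B=(\emptyset,\emptyset)$ and $\lbonly(v)=\gamma\,x(\delta^-(v))$. There only the in-degree property of the subtour covers is needed.
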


\cite{TraV25} proved this for $\alpha>8$.
Our improvement comes from a relatively small enhancement of Svensson's algorithm, which was originally
developed for the unit-weight special case (which we call \textsc{Asymmetric Graph TSP}) by \cite{Sve15}
and adapted to general ATSP by \cite{SveTV20}.
In each iteration, this algorithm computes a subtour cover, 
which is used either to compute a better initialization that allows us to restart,
or to extend the current partial solution. 
We introduce a new opportunity for a better initialization, which allows us to
prove a better upper bound on the total cost of the edges used to extend the partial solution.

Our result shows that we have still not completely understood the full potential of the algorithm.
We guess that further improvements might be possible.

\section{Revisiting Vertebrate Pairs and Svensson's Algorithm}

This section reviews the setting, the notation, and Svensson's algorithm (without the change that we will make),
following \cite{TraV25} closely.

\subsection{Reduction to Vertebrate Pairs}\label{section:vertebratepairs}

\paragraph{Strongly laminar instances.}
An ATSP instance consists of a strongly connected digraph $G=(V,E)$ with edge costs $c:E\to\mathbb{R}_{\ge 0}$.
\cite{SveTV20} and \cite{TraV22} showed that, without loss of generality, we can assume a certain structure.
We are given a laminar family $\Lscr$ of subsets of $V$ with weights $y\in\mathbb{R}_{\ge 0}^{\Lscr}$.
For every $e\in E$, we have $c(e)=\sum_{L\in\Lscr:\, e\in\delta(L)} y_L$. 
Define $\Lscr_{\ge 2}:=\{L\in\Lscr: |L|\ge 2\}$.
We are also given an optimum LP solution~$x$ and have $x(\delta(L))=2$ for all $L\in \Lscr$, 
and the induced subgraph $G[L]$ is strongly connected for each $L\in\Lscr$.
We write $y_v:=y_{\{v\}}$ for $v\in V$ with $\{v\}\in\Lscr$ and $y_v:=0$ for other vertices $v$.

\paragraph{Backbone.}
A \emph{backbone} $B=(V(B),E(B))$ is a connected Eulerian multi-subgraph of $(V,E)$
such that $V(B)\cap L\not=\emptyset$ for all $L\in \Lscr_{\ge 2}$.
The backbone is fixed throughout this paper.

\paragraph{ATSP via vertebrate pairs.}
\cite{TraV22}, improving on \cite{SveTV20}, showed:

\begin{theorem}[\cite{TraV22}] \label{thm:reductiontovertebratepairs}
Let $\eta\ge 0$. 
Suppose there is a polynomial-time algorithm that computes, for any given strongly laminar ATSP instance 
and any given backbone $B$, 
a multi-edge set $H$ such that $E(B)\cupp H$ is a tour and $c(H)\le 2\cdot\LP + \eta\cdot \sum_{v\in V\setminus V(B)} 2y_v$.
Then there is a polynomial-time algorithm that computes a tour of cost at most $(8+\eta)\cdot\LP$ for any given ATSP instance.
\end{theorem}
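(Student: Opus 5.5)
The plan is to reproduce the reduction chain of \cite{SveTV20} as sharpened by \cite{TraV22}. It has two stages: a reduction from general ATSP to strongly laminar instances with a backbone, and a recursion on the laminar family that uses the assumed vertebrate-pair algorithm as a black box. The total loss should come out as $(8+\eta)\cdot\LP$.

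\textbf{Stage one: reduction to strongly laminar instances.} Start from an optimum solution $x$ of \eqref{eq:atsplp} and an optimum dual solution. Uncross the dual to obtain a laminar family $\Lscr$ with weights $y\ge 0$. By complementary slackness, $x(\delta(L))=2$ for every $L\in\Lscr$ with $y_L>0$.

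Next, replace the cost $c$ by the induced cost $c'(e)=\sum_{L:e\in\delta(L)}y_L$ on the support of $x$. We have $c'\le c$ there, with equality on edges having $x_e>0$. Hence $\LP$ is unchanged and any tour for $c'$ on the support has the same $c$-cost.

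To get strong connectivity of each $G[L]$, handle non-strongly-connected $L$ as in \cite{SveTV20}. Either split $L$ along the strongly connected components of the support, or contract tight sets and recurse.

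\textbf{Stage two: recursion building a backbone.} Process $\Lscr$ as follows.
\begin{enumerate}
\item Pick a maximal set (or the whole instance) and contract each maximal proper child set $L\in\Lscr_{\ge2}$ to a single vertex.
\item Solve the contracted instance recursively.
\item Choose the backbone $B$ to be the recursively obtained solution on a suitable ``top-level'' instance, so that $B$ meets every set in $\Lscr_{\ge2}$.
\item Apply the assumed algorithm to obtain $H$ with $c(H)\le 2\LP+\eta\sum_{v\notin V(B)}2y_v$.
\end{enumerate}

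The singleton terms $2y_v$ for $v\notin V(B)$ are charged to the dual value: $\sum_v 2y_v\le \sum_L 2y_L=\LP$. So the $\eta$ term contributes at most $\eta\cdot\LP$.

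The backbone cost itself must be bounded by roughly $6\cdot\LP$, giving $2+6+\eta=8+\eta$. Following \cite{TraV22}, this is done by an induction in which the inner sets $L$ are handled with the $y$-weights of $L$ moved onto the contracted vertex. The cost of $B$ inside each $L$ is then paid by $y_L$ via the bound $c(\delta(L)\text{-crossings})\le 2y_L$ per visit, with each set visited a bounded number of times.

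\textbf{Main obstacle.} The hard step is the accounting in this induction: making the recursive solutions for the children, the connection edges, and the backbone fit together with constant $8$ rather than something larger. I would first try the potential-function argument of \cite{TraV22}. In it one proves by induction on $|\Lscr|$ that the produced tour costs at most $(8+\eta)\LP - \kappa\sum_{v}y_v$ for a suitable slack $\kappa$, and uses that slack to pay for re-entering contracted sets. If the constants do not close, I would fall back to losing a slightly larger constant and then tighten using the fact that each tight set is crossed exactly twice in fractional terms.
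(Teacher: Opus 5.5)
This paper does not prove the theorem; it quotes it from Traub and Vygen (2022), so your proposal has to stand on its own, and it does not. Stage one is the standard reduction and is fine in outline, up to two small inaccuracies:
\begin{itemize}
\item For $e=(v,w)$ in the support, complementary slackness gives $c(e)=\sum_{L:e\in\delta(L)}y_L+a_v-a_w$, where $a$ are the duals of the flow-conservation constraints. So $c'$ and $c$ agree on Eulerian edge sets, not edgewise.
\item Strong connectivity of $G[L]$ is not automatic for tight sets. It comes from choosing the dual suitably.
\end{itemize}
Stage two carries all of the quantitative content, and it is only a plan: you never construct a backbone. The construction you outline also breaks at a definite point. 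A tour $T$ of an instance in which sets of $\Lscr_{\ge 2}$ are contracted is not a backbone after uncontraction. Whenever $T$ passes through a contracted set $L$, it enters at some $u_j\in L$ and leaves at some $w_j\in L$. So you must add a $u_j$-$w_j$ connection in $G[L]$ for every visit $j$, and you must also reach the nested sets of $\Lscr_{\ge 2}$ inside $L$. Your charging ``$\le 2y_L$ per visit'' has the inequality the wrong way round. The two crossing edges of a visit cost \emph{at least} $2y_L$, which only bounds the number of visits by $c(T)/(2y_L)$. The only general bound on an inner connection is $\mathrm{dist}_{G[L]}(u_j,w_j)\le\sum_{L'\in\Lscr,\,L'\subsetneq L}2y_{L'}$, and this has nothing to do with $y_L$. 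For example, take $y_L$ tiny, $G[L]$ a long directed cycle with $y_v=1$, and a black-box tour that passes through the contracted vertex many times with entry and exit far apart. This is exactly the difficulty that the reductions of Svensson, Tarnawski and V\'egh and of Traub and Vygen are designed to overcome.

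The natural repair does not close either. Give the contracted vertex weight $y_L+\frac12\sum_{L'\subsetneq L}2y_{L'}$. Then the LP value is unchanged and every visit pays for its inner connection, so a tour of cost $\rho\cdot\LP$ of the contracted instance uncontracts to a backbone of cost at most $\rho\cdot\LP$. But the final call adds $2\cdot\LP$ plus an $\eta$-term. An induction over the laminar family therefore turns $\rho$ into at least $\rho+2$ at each level, and the constant grows with the depth of $\Lscr$.

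What is missing is a construction in which every $y_L$ is charged only a bounded number of times, independently of the depth. In particular, since the bound must hold for every $\eta$, the $\eta$-terms of all calls to the assumed algorithm must together be at most $\eta\cdot\LP$.

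This also suggests your target split is off. Using only the assumed algorithm, a backbone will in general carry an $\eta$-term itself. If all weight sits on non-singleton sets, finding a backbone is essentially a node-weighted ATSP instance on the contracted sets. So there is no reason to expect $c(E(B))\le 6\cdot\LP$. Rather, the backbone's $\eta$-term has to be paid by the non-singleton weights that the final call does not charge.

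Finally, falling back to ``a slightly larger constant'' would not prove the theorem as stated.
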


In this work, we will reduce $\eta$ to less than $7$.

\paragraph{Asymmetric Graph TSP.}
In \textsc{Asymmetric Graph TSP}, we have $\Lscr=\bigl\{\{v\}:v\in V\bigr\}$ and $y_v=\frac{1}{2}$ for all $v\in V$.
In this case, we work with the empty backbone $B=(\emptyset,\emptyset)$ and do not need 
Theorem~\ref{thm:reductiontovertebratepairs}.
It is important to note that here we will \emph{not} require $x(\delta(L))=2$ for the singletons $L\in\Lscr$.
This allows us to deal with \textsc{Asymmetric Graph TSP} and general \textsc{ATSP} simultaneously.

\paragraph{Main result.}
In this paper we show:

\begin{theorem}\label{thm:main}
Let $\alpha$ be a constant with $\alpha > 3 + 2\sqrt{2}$.
Then there is a polynomial-time algorithm that computes, for any given strongly laminar ATSP instance 
and any given backbone $B$, 
a multi-edge set $H$ such that $E(B)\cupp H$ is a tour and 
$c(H)\le 2\cdot\LP + ( \alpha +1 ) \cdot \sum_{v\in V\setminus V(B)} 2y_v$.
If $\Lscr_{\ge 2}=\emptyset$ and $B=(\emptyset,\emptyset)$, then $c(H)\le \alpha\cdot \LP$.
\end{theorem}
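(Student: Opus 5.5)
We will prove \cref{thm:main} by running Svensson's algorithm from \cite{Sve15,SveTV20,TraV25}, enhanced by one further restart rule, and analyzing it with the potential-function argument of \cite{TraV25}.

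The algorithm maintains a partial solution $H$, initially empty, such that $E(B)\cupp H$ is Eulerian. Every connected component of $(V,E(B)\cupp H)$ other than the one containing the backbone is a \emph{small} component. Let $\lbonly$ be the lower bound $\lb{v}=2y_v$ for $v\notin V(B)$, and call the vertex set of the backbone component its \emph{covered} part. In each iteration we compute a subtour cover: an Eulerian multi-edge set $F$, obtained from a circulation rounded via $x$, such that every vertex not yet covered gets an incoming edge. The cost bound is $c(F)\le \LP$-type terms plus $\lbonly$ of the uncovered part, split into a part inside the laminar sets and a part crossing them, exactly as in \cite{TraV25}. Then we distinguish two cases.

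\emph{Case 1: a good component exists.} Some connected component $C$ of $F$ together with the current small components has cost small relative to the $\lbonly$-weight of the newly covered vertices, namely at most $(\alpha-1)\,\lb{\cdot}$. In this case we add $C$ to $H$ (extension step). Charging ensures that over all extensions $c(H)\le \alpha\cdot\sum_{v\notin V(B)}\lb{v}$ plus the $2\LP$ backbone-connection term. The latter comes from the strongly laminar structure: edges crossing $\delta(L)$ for $L\in\Lscr_{\ge 2}$ are paid by $y_L$ twice.

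\emph{Case 2: no good component exists.} Then the components are expensive, and we derive a better initialization: a new Eulerian multigraph in which the covered part has strictly smaller $\lbonly$-deficit relative to its cost, so we restart. Our new ingredient is a \emph{second} restart opportunity. In addition to the classical restart (using the component with the best ratio), we also restart when the combination of the previous partial solution and $F$ restricted to the uncovered vertices yields a cheaper cover. This lets us assume that, whenever we extend, both the cost ratio of $F$ and that of $H$ are bounded. Balancing the two restart thresholds gives a condition of the form $(\alpha-1)^2 \ge 8\alpha-\dots$, i.e.\ $\alpha^2-6\alpha+1>0$, whose root is $3+2\sqrt 2$. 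The number of restarts is polynomial because each restart decreases a potential $\potPhi{\cdot}$ by a factor depending on $\alpha-(3+2\sqrt2)>0$. This is where strictness of the inequality on $\alpha$ is used.

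\textbf{Main obstacle.} I expect the key difficulty to be Case 2: showing that if neither restart rule applies, the extension cost is at most $\alpha$ times the new lower bound. I would first try to parametrize the cost of $F$ on already-covered versus new vertices by a ratio $t$. I would write the two restart conditions as linear inequalities in $t$ and then optimize over $t$. The worst case should give the quadratic $\alpha^2-6\alpha+1$.

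For the unit-weight statement ($\Lscr_{\ge2}=\emptyset$, empty backbone), there is no $2\LP$ term, and the first component itself serves as the initialization. The same accounting then gives $c(H)\le\alpha\sum_v y_v\cdot 2/2\le\alpha\cdot\LP$, using $\sum_v 2y_v = |V| \le \LP$. For general instances, the bound $c(H)\le 2\LP+(\alpha+1)\sum 2y_v$ follows since the backbone-attachment edges cost at most one additional $\lbonly$ per vertex.
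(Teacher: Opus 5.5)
Your proposal does not yet contain the argument. The step you call the main obstacle (bounding the extension cost when no restart applies) is left as a plan, and the quadratic $\alpha^2-6\alpha+1$ is read off from the target constant rather than derived. Several parts of the setup would also fail as stated:
\begin{itemize}
\item With budgets $\lb{v}=2y_v$, every local circuit $C$ has $\cost{E(C)}=\sum_{v\in V(C)}2y_v=\lb{V(C)}$, i.e.\ zero slack. Every larger connected local Eulerian multigraph then has nonpositive slack, so the slack that pays for $X$-circuits and produces better initializations vanishes.
\item You need $\lb{v}=\gamma\cdot 2y_v$ (and $\gamma\cdot x(\delta^-(v))$ in the graph case, where $x(\delta^-(v))$ may exceed $1$) with a free parameter $\gamma$. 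The constant comes from minimizing $\gamma\bigl(1+\frac{2}{\gamma-1}\bigr)=\frac{\gamma(\gamma+1)}{\gamma-1}$, attained at $\gamma=1+\sqrt2$. Your quadratic is just the discriminant condition $(\alpha-1)^2\ge 4\alpha$ for some $\gamma$ with $\frac{\gamma(\gamma+1)}{\gamma-1}\le\alpha$.
\item Restarts are not counted by a multiplicative decrease of $\potPhi{\cdot}$. Each restart \emph{increases} $\potPhi{\cdot}$ additively by more than $\min_{v}\lbonly(v)^{p}$, and $\potPhi{\cdot}$ is at most $O(n^{2p})$ times this. Strictness of $\alpha>3+2\sqrt2$ is needed only to leave room for a constant $\epsilon>0$, which fixes $p=\lceil 2/\epsilon\rceil$ and costs the factors $\gamma+\epsilon$ and $1-\epsilon$.
\item The extra $2\cdot\LP+\sum_v 2y_v$ is not ``one more $\lbonly$ per vertex'' for backbone attachment. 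It is the one-time cost \eqref{eq:subtourcover_totalcost} of the subtour cover in the final iteration.
\end{itemize}

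More importantly, the ideas that actually drive the bound are missing. Svensson's algorithm is not a good-component/no-good-component dichotomy. The initialization $\tilde H$ plus all $X$-edges cost at most $\lb{V\setminus V(B)}$, and the $F$-edges added in Step \smallcircled{3} are charged to initialization components $\tilde W_i$, each $i$ at most once.

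The new restart rule is a test on each circuit $C$ of a circuit decomposition of $F$. If $\cost{E(C)}>\frac{\gamma(1-\epsilon)}{\gamma(1-\epsilon)-1}\sum_{v\in V(C)\cap\tilde W_i}2y_v$ with $i=\ind(C)$, then $\tilde H[\tilde W_i]\cupp E(C)$ has slack larger than $\slack{\tilde W_i,\tilde H}+\epsilon\cdot\lb{V(C)\setminus\tilde W_i}$. A variant of the initialization lemma, with $\lb{V(D)\setminus\tilde W_{\ind(D)}}$ in place of $\lb{V(D)}$, turns this into a significantly better initialization.

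If no circuit passes the test, use the in-degree guarantee \eqref{eq:indegree2subtourcover}, $|F\cap\delta^-(v)|\le 2x(\delta^-(v))$. It implies each $v\in\tilde W_i$ lies on at most $2x(\delta^-(v))$ circuits of $\Fscr_i$, so summing the failed tests gives $\cost{F_i}\le\frac{2(1-\epsilon)}{\gamma(1-\epsilon)-1}\lb{\tilde W_i}$. This in-degree bound is the source of the factor $2$, hence of $3+2\sqrt2$, and your proposal never uses it. Your vague second rule (restart when the old partial solution plus $F$ gives a cheaper cover) yields no inequality that can be summed over $i$.

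You would also need the argument that each $F_i$ is added only once. A circuit of a later $\Fscr_i$ must leave the component $Z$ into which $F_i$ was merged earlier. So either it was available as an $X$-circuit when that earlier loop stopped, or, using $\gamma\ge\frac{2}{1-\epsilon}$, it violates the per-circuit slack bound \eqref{eq:circuitshavesmallslack}. That bound is itself enforced by a further restart test on each single circuit.
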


Together with Theorem~\ref{thm:reductiontovertebratepairs}, this directly implies Theorem~\ref{thm:mainatsp}.

\subsection{High-Level Description of Svensson's Algorithm}

We first give a brief high-level outline and then define the necessary terms. 
Svensson's algorithm tries to extend a given backbone to a tour.
In addition to a strongly laminar instance and a backbone~$B$, 
Svensson's algorithm is given a multi-edge set $\tilde H$, called \emph{initialization} 
(see Section~\ref{section:svenssondetails}). 
Initially, $\tilde H=\emptyset$.
After setting $H:=\tilde H$, the algorithm iterates the following three steps
until a \emph{significantly better initialization} was found or $E(B)\cupp H$ is a tour.
\begin{enumerate}
\item[\protect\smallcircled{1}] Compute a \emph{subtour cover} $F$ for $H$.
\item[\protect\smallcircled{2}] Using the edges of $\tilde H$ and $F$, try to find a \emph{significantly better initialization}.
\item[\protect\smallcircled{3}] Add a certain Eulerian multi-edge set to $H$ and check whether $(V,E(B)\cupp H)$ is a tour.
\end{enumerate}
When a significantly better initialization is found in Step \smallcircled{2}, we simply restart the algorithm with this initialization.
When $(V,E(B)\cupp H)$ is a tour after \smallcircled{3}, we stop and output $H$.

In this paper, we will change Step \smallcircled{2} and the analysis in order to obtain a better guarantee.

\subsection{Details of Svensson's Algorithm}\label{section:svenssondetails}

\paragraph{Local.}
An edge $e$ is called \emph{local} if $e\notin\delta(L)$ for all $L\in \Lscr_{\ge 2}$.
A local edge $e=(v,w)$ has cost $c(e)=y_v+y_w$.  
A graph or its edge set is called local if all its edges are local.
In \textsc{Asymmetric Graph TSP}, all edges are local, and the backbone is empty.
For general ATSP, Svensson's algorithm uses only local edges until the very end when it connects everything to the backbone.

\paragraph{Vertex budgets.}
We define budgets $\lbonly : V\setminus V(B) \rightarrow  \mathbb{R}_{>0}$ to pay for local edges. Let $n:=|V|$.
Fix constants $\epsilon$ and $\gamma$ with $0<\epsilon \le \frac{1}{4}$ and $\gamma \ge \frac{2}{1-\epsilon}$.
For \textsc{Asymmetric Graph TSP}, we set
\begin{equation*} 
   \lbonly(v) \ := \ \gamma \cdot x(\delta^-(v))
\end{equation*} 
for all $v\in V$. Here $\lbonly(V) = \gamma\cdot \LP \le \gamma n^2 \le n^2 \cdot \min\{\lbonly(v):v\in V\}$.
For general \textsc{ATSP}, we set
\begin{equation*} 
   \lbonly(v) \ := \ \max\Biggl\{ \gamma \cdot 2y_v, \ \sfrac{\epsilon}{n} \cdot\! \sum_{w\in V\setminus V(B)}2y_w \Biggr\}
\end{equation*}
for all $v\in V\setminus V(B)$. 
Here, 
$\lb{V\setminus V(B)}\le (\gamma+\epsilon)\sum_{v\in V\setminus V(B)}2y_v 
\le \frac{(\gamma+\epsilon)n}{\epsilon} \cdot \min\bigl\{\lbonly(v):v\in V\setminus V(B)\bigr\}$.
So in both cases, we have
\begin{equation}\label{eq:budgetspread}
\lb{V\setminus V(B)} \ \le \ \sfrac{(\gamma+\epsilon)n^2}{\epsilon} \cdot \min\bigl\{ \lbonly(v) : v\in V\setminus V(B) \bigr\}
\end{equation}
and
\begin{equation}\label{eq:budgetlowerbound}
\lbonly(v) \ \ge \ \gamma \cdot x(\delta^-(v)) \cdot 2y_v  \qquad \text{for all } v\in V\setminus V(B).
\end{equation}

\cite{TraV25} set $\gamma$ slightly larger than $4$, but here we will choose a smaller value.

\paragraph{Slack.} 
For a subset $\tilde V$ of $V\setminus B$ and a multi-subset $\tilde E$ of $E$, we write 
\[
\slack{\tilde V,\tilde E} \ := \ \lb{\tilde V} - c \bigl( \tilde E[\tilde V] \bigr).
\]
When we use this notation, $\tilde V$ will be the vertex set of a connected component of $\bigl(V,\tilde E\bigr)$.
We call a multi-subset $H$ of $E$ \emph{light} if 
$\slack{V(D),E(D)}\ge 0$ for every connected component $D$ of $(V,H)$.

\paragraph{Initialization.} An \emph{initialization} is a light local Eulerian multi-subset of $E[V\setminus V(B)]$.
Set $p:=\bigl\lceil\frac{2}{\epsilon}\bigr\rceil$. 
Given an initialization $\tilde H$ such that the connected components of 
$\bigl( V\setminus V(B),\tilde H \bigr)$ have vertex sets $\tilde W_1,\ldots,\tilde W_k$, we write 
\[ 
\potPhi{\tilde H} \ := \ \sum_{i=1}^k \ \slack{\tilde W_i, \tilde H}^{p}.
\]
Given two initializations $\tilde H_1$ and $\tilde H_2$, 
we say that $\tilde H_2$ is \emph{significantly better} than $\tilde H_1$ if
\begin{equation*}
 \potPhi{\tilde H_2} - \potPhi{\tilde H_1} \ > \ \min\bigl\{ \lbonly(v)^{p} : v\in V\setminus V(B) \bigr\}. 
\end{equation*}

\paragraph{Number of restarts.}

Since 
$0\le\potPhi{\tilde H} \le \lb{V\setminus V(B)}^p 
\le \bigl((\gamma+\epsilon)\epsilon^{-1}n^2\bigr)^{p} \cdot \min\bigl\{ \lbonly(v)^{p} : v\in V\setminus V(B) \bigr\}$ 
by \eqref{eq:budgetspread},
there can be at most $O(n^{2p})$ restarts with a significantly better initialization. Recall that $\epsilon$, $\gamma$, and $p$ are constants.

\paragraph{Subtour cover.}
Given an Eulerian multi-subset $H$ of $E[V\setminus V(B)]$, 
a \emph{subtour cover} for $H$ is an Eulerian multi-subset $F$ of $E$ such that
\begin{itemize}
\item $F\cap \delta(W) \ne \emptyset$ for each vertex set $W$ of a connected component of $(V\setminus V(B),H)$. 
\item Every connected component $D$ of $(V,F)$ with $V(D)\cap V(B)\not=\emptyset$ is local.
\end{itemize}

\paragraph{Index.}
We sort the connected components $\tilde W_1,\ldots,\tilde W_k$ of $(V\setminus V(B),\tilde H)$ such that
$$\slack{\tilde W_1,\tilde H} \ \ge \ \cdots \ \ge \ \slack{\tilde W_k,\tilde H}$$
and let $\tilde W_0:=V(B)$.
Then, for a connected multi-subgraph $D$ of $G$, we define the \emph{index} of $D$ to be
\begin{equation*}
\mathrm{ind}(D) \ := \ \min \bigl\{ j\in\{0,\ldots,k\}: V(D)\cap \tilde W_j\not=\emptyset \bigr\}.
\end{equation*}
Note that $\tilde H$ and the sets $\tilde W_0,\ldots,\tilde W_k$ are fixed during an execution of Svensson's algorithm.

\paragraph{Step \protect\smallcircled{3} of Svensson's Algorithm.}
For a current multi-edge set $H$ and a subtour cover $F$ for $H$, this step works as follows.

\vspace*{1mm}
\noindent\rule{\textwidth}{0.4pt}
\begin{algorithmic}
\State{Set $X\gets\emptyset$.} 
\State{\algorithmicrepeat}
\State{\quad Let $Z$ be the connected component of $(V, E(B) \cupp H \cupp F \cupp X)$ with largest $\ind(Z)$.}
\State{\quad \algorithmicif{} there is a local circuit $C$ with $E(C)\cap\delta(V(Z))\not=\emptyset$ and $c(E(C)) \le \slack{\tilde W_{\ind(Z)},\tilde H}$}
\State {\qquad \algorithmicthen{} add the edges of $C$ to $X$.}
\State{\algorithmicuntil{} no such circuit $C$ exists.}
\State{Add the edges of $(F\cupp X)[V(Z)]$ to $H$.}
\end{algorithmic}
\vspace*{-1mm}
\noindent\rule{\textwidth}{0.4pt}
\vspace*{0mm}

Every iteration of the above repeat-loop reduces the number of connected components of $(V, E(B) \cupp H \cupp F \cupp X)$.
Every global iteration of Svensson's algorithm (in particular Step~\smallcircled{3}) 
reduces the number of connected components of $(V,E(B)\cupp H)$.
Hence the algorithm runs in polynomial time. All considered edge sets (including the backbone) are Eulerian, 
so when $(V,E(B)\cupp H)$ is connected, it is a tour.

\subsection{Subtour Cover Guarantee (Step \protect\smallcircled{1} of Svensson's Algorithm)}\label{section:subtourcoverguarantees}

We implement Step \smallcircled{1} of Svensson's algorithm using the following:

\begin{theorem}[\cite{TraV25}]
\label{thm:subtour_cover}
Given a strongly laminar ATSP instance $(G,\Lscr,x,y)$ with $G=(V,E)$, a (possibly empty) backbone $B$, 
and an Eulerian multi-subset $H$ of $E$,
we can compute in polynomial time a subtour cover $F$ for $H$ such that 
\begin{equation}\label{eq:subtourcover_totalcost}
c(F) \ \le \ 2\cdot\LP+\sum_{v\in V\setminus V(B)}2y_v
\end{equation}
and for every connected component $D$ of $(V,F)$ with $V(D)\cap V(B)=\emptyset$ and every vertex $v\in V(D)$ with $y_v>0$, 
\begin{equation} \label{eq:indegree2subtourcover}
|F\cap\delta^-(v)| \ \le \ 2 \cdot x(\delta^-(v)). 
\end{equation}
\end{theorem}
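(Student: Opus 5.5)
This result is quoted from \cite{TraV25}, so it suffices to recall how such a subtour cover is obtained. The plan is to set up a circulation LP, find a vertex solution, and read $F$ off an integral optimum. Let $W_1,\dots,W_k$ be the vertex sets of the connected components of $(V\setminus V(B),H)$. Consider the polytope of $z\in\mathbb{R}^E_{\ge 0}$ with
\begin{itemize}
\item flow conservation at every vertex;
\item $z(\delta^-(W_i))\ge 1$ for each $i$ (a cut requirement);
\item $z(\delta(L))\le x(\delta(L))=2$ for all $L\in\Lscr_{\ge 2}$, so that nonlocal edges are charged no more than in $x$;
\item degree caps $z(\delta^-(v))\le x(\delta^-(v))$ at vertices with $y_v>0$.
\end{itemize}
The point $\frac12 x$ is feasible: cuts have value $\ge 1$, degrees are halved, and the laminar constraints hold. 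Hence the LP optimum has cost at most $\frac12\LP$, roughly. To cover the backbone condition, contract each component meeting $V(B)$, or more precisely allow flow touching $V(B)$ only on local edges. The most convenient way to do this is to treat $V(B)$ as one of the sets $W_i$, together with the rule that components containing backbone vertices use only local edges.

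Next I would round. The constraint matrix (flow conservation plus laminar cut constraints) is close to a network matrix. Using the standard iterative-relaxation or integer-circulation rounding, as in the Svensson--Tarnawski--Végh argument, I would get an integral Eulerian $F$ satisfying every cut requirement. The laminar and degree constraints may be violated by a constant additive amount; doubling gives $|F\cap\delta^-(v)|\le 2x(\delta^-(v))$. The cost has two parts. Nonlocal edges are charged through $y_L$ for $L\in\Lscr_{\ge 2}$ with crossing number at most $2\cdot$(value in $x$), contributing at most $2\cdot\LP$ from those sets. Local edges are charged through $y_v$ at singletons, and the additive rounding error is at most one extra edge into each $v\notin V(B)$. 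This yields the extra term $\sum_{v\in V\setminus V(B)}2y_v$ in \eqref{eq:subtourcover_totalcost}.

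The main obstacle is making the rounding simultaneously satisfy four things: the cut covering, the per-vertex bound \eqref{eq:indegree2subtourcover} (restricted to components avoiding $V(B)$), locality of components touching the backbone, and the precise additive cost term. I would first try to decompose $\frac12 x$ (scaled by 2 to $x$) into a convex combination of Eulerian integral vectors via a flow-decomposition/Hoffman circulation argument. Each $W_i$-cut would be crossed by construction, and a cheapest member would be chosen. If the per-vertex bound then fails, I would split vertices into in/out copies with capacities $\lceil 2x(\delta^-(v))\rceil$ and use integrality of circulations with those capacities. The rounding of capacities is exactly what generates the $2y_v$ terms.
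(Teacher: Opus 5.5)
The paper does not reprove this statement. It invokes Theorem~8.24 of \cite{TraV25} and only adds that the \emph{proof} there gives \eqref{eq:indegree2subtourcover}; the theorem as stated there uses the weaker condition of Definition~7.15. For \textsc{Asymmetric Graph TSP} it points to a simpler argument (Theorem~6.6 there). Your reconstruction does not close, for the following reasons.

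First, $\frac12 x$ is not feasible for your LP. Since $x(\delta^-(W_i))=\frac12x(\delta(W_i))\ge 1$, you only get $\frac12x(\delta^-(W_i))\ge\frac12$. The natural feasible point is $x$ itself, of cost $\LP$, so the ``$\frac12\LP$, then double'' accounting has no basis.

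Second, the rounding step is never specified: ``close to a network matrix'', ``iterative relaxation'', ``a constant additive amount'' and ``doubling gives'' are not arguments. Nothing you write yields the exact bounds $|F\cap\delta^-(v)|\le 2x(\delta^-(v))$ and $c(F)\le 2\cdot\LP+\sum_{v\in V\setminus V(B)}2y_v$. In particular, the claim that the rounding error is one extra edge into each $v\notin V(B)$, and only there, is simply asserted.

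Third, the fallback of writing $x$ as a convex combination of integral Eulerian vectors, each crossing every $W_i$ ``by construction'', is circular. A cycle decomposition of $x$ has no such property, and that every member satisfies the cut constraints is exactly the integrality statement you would need to prove.

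Fourth, caps $\lceil 2x(\delta^-(v))\rceil$ exceed $2x(\delta^-(v))$ whenever $2x(\delta^-(v))\notin\mathbb{Z}$, which is allowed in the graph case. So they do not give \eqref{eq:indegree2subtourcover}. Caps $\lceil x(\delta^-(v))\rceil$ would work, since $x(\delta^-(v))\ge 1$.

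The central missing idea is control of locality. Whether a component of $(V,F)$ must be local depends on whether it contains a backbone vertex, i.e., on the connectivity of the rounded solution. Neither of your devices addresses this:
\begin{itemize}
\item a component may touch $V(B)$ only through local edges and still use non-local edges elsewhere;
\item a non-local circuit far from $V(B)$ satisfies all of your constraints;
\item adding $V(B)$ as another set $W_i$ merely forces $F$ to enter $V(B)$.
\end{itemize}
You are also aiming at the literal wording of the definition, but the way subtour covers are used later requires the opposite direction. The index-$0$ part is paid for by \eqref{eq:subtourcover_totalcost}, while the parts of $F$ in components avoiding $V(B)$ are added to $H$ and priced as $\sum 2y_v$. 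So components \emph{disjoint} from $V(B)$ must be local, i.e., every non-local edge of $F$ must lie in a component reaching the backbone; the ``$\not=$'' in the definition appears to be a typo.

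Achieving this within the cost bound is the substance of the cited result. Your plan never uses the defining property of the backbone: it meets every $L\in\Lscr_{\ge 2}$, so a non-local edge entering $L$ can be attached to $V(B)$ inside the strongly connected $G[L]$. As it stands you have a list of candidate tools, not a proof. Either cite \cite{TraV25}, noting that \eqref{eq:indegree2subtourcover} comes from its proof rather than its statement, or give the concrete construction.
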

\cite{TraV25} (Theorem 8.24, Definition 7.15) stated the result with a weaker condition 
instead of \eqref{eq:indegree2subtourcover}, but actually proved \eqref{eq:indegree2subtourcover};
see the proof of Theorem 8.24.
For \textsc{Asymmetric Graph TSP}, \eqref{eq:subtourcover_totalcost} is not needed, and there is a simpler proof
(Theorem 6.6 in \cite{TraV25}).

In the next two sections, we give a self-contained proof of Theorem~\ref{thm:main}, using Theorem~\ref{thm:subtour_cover}.

\section{Checking Circuits for a Better Initialization}

Let us illustrate our main new idea in a simple setting, where $y_v$ is the same for all $v\in V\setminus V(B)$ 
and $x(\delta^-(v))=1$ for all $v\in V$.
See also Figure~\ref{fig:newidea_cheap_F}.

In Step \smallcircled{2} of Svensson's algorithm, we first decompose the subtour cover into cycles.
Consider one of these cycles, say $C$ with $\ind(C)=i\ge 1$. 
If $\gamma\cdot \bigl|V(C)\setminus \tilde W_i\bigr| \gg \bigl|E(C)|$, then the graph
$\bigl(\tilde W_i\cup V(C),\tilde H[\tilde W_i]\cup E(C) \bigr)$ has much larger slack than 
$\bigl(\tilde W_i,\tilde H[\tilde W_i] \bigr)$.
This will lead to a significantly better initialization (see Lemma~\ref{lemma:improved_initialization} below),
and we can restart Svensson's algorithm.
Otherwise, using $|E(C)|=|V(C)|$, we obtain, approximately,
$(\gamma-1) \cdot \bigl|V(C)\setminus \tilde W_i\bigr| \le \bigl|V(C)\cap \tilde W_i\bigr|$.

The next observation is that, due to \eqref{eq:indegree2subtourcover}, any vertex of $\tilde W_i$ 
belongs to at most two of those cycles.
We conclude that the total number of vertices in cycles $C$ with $\ind(C)=i$ in our subtour cover decomposition 
is not much more than $2\bigl(1+\frac{1}{\gamma-1}\bigr)\cdot\bigl|\tilde W_i\bigr|$.

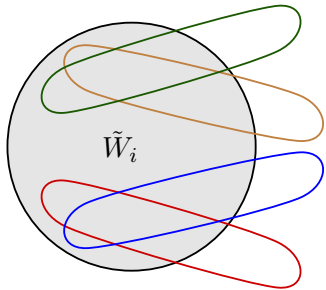
\begin{figure}[hb]
\begin{center}
 \begin{tikzpicture}[scale=1.5, line width=0.7]

\draw[fill=gray, fill opacity =0.2] (3,1) ellipse (1.1 and 1.1) {};
\node at (2.9,1) {$\tilde W_i$};

\draw[darkred] plot [smooth cycle] coordinates {(2.4,0.3) (2.4,0.7) (4.3,0.15) (4.3,-0.25) };
\draw[blue] plot [smooth cycle] coordinates {(2.6,0.1) (2.6,0.5) (4.5,0.95) (4.5,0.55) };
\draw[brown] plot [smooth cycle] coordinates {(2.6,1.9) (2.6,1.5) (4.5,1.05) (4.5,1.45) };
\draw[darkgreen] plot [smooth cycle] coordinates {(2.4,1.7) (2.4,1.3) (4.3,1.85) (4.3,2.25) };

 \end{tikzpicture}
\end{center}
\vspace*{-4mm}
\caption{Assume for simplicity that $y_v$ is the same for all $v\in V$ and $x(\delta^-(v))=1$ for all $v\in \tilde W_i$.
The subtour cover can be decomposed into cycles. 
Among these cycles, consider those that intersect the set $\tilde W_i$ but no connected component of the initialization with smaller index.
Four of these are sketched in the figure in different colors.
If for any of these, the outside part is large compared to the inside part, then one gets a better initialization.
In the simple setting, every vertex of $\tilde W_i$ belongs to at most two of these cycles, 
so the total number of vertices (and hence edges) of the cycles
is not much more than the total number of vertices in $\tilde W_i$.  
\label{fig:newidea_cheap_F}}
\end{figure}

\subsection{New Initialization Lemma}

We now make this idea precise and generalize it to general instances. The first major part is:

\begin{lemma} \label{lemma:newbetterinit}
Let $C$ be a local circuit in $G[V\setminus V(B)]$ with 
$V(C)\setminus \tilde W_{\mathrm{ind}(C)}\not=\emptyset$ and 
\begin{equation}\label{eq:exploringcircuit}
 \cost{E(C)} \ > \ \left(1+\sfrac{1}{\gamma(1-\epsilon)-1}\right) \cdot \sum_{v\in V(C)\cap\tilde W_{\mathrm{ind}(C)} } 2y_v.
\end{equation}
Then we can compute in polynomial time a significantly better initialization.
\end{lemma}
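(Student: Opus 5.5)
The plan is to build the new initialization directly from $\tilde H$ and $C$, in the way Svensson's original restart argument does. Let $i:=\ind(C)$ and $s:=\slack{\tilde W_i,\tilde H}$. Since $C$ is a local circuit, every vertex of $C$ has exactly one entering and one leaving edge of $C$. So $c(E(C))=\sum_{v\in V(C)}2y_v=a+b$, where $a:=\sum_{v\in V(C)\cap\tilde W_i}2y_v$ and $b:=\sum_{v\in V(C)\setminus\tilde W_i}2y_v$.

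\textbf{Step 1: turn \eqref{eq:exploringcircuit} into a budget inequality.} Write $g:=\gamma(1-\epsilon)$. Condition \eqref{eq:exploringcircuit} reads $a+b>\frac{g}{g-1}a$. This is equivalent to $a<(g-1)b$, hence $c(E(C))<g\,b$. The cut constraint of \eqref{eq:atsplp} for $U=\{v\}$ gives $x(\delta^-(v))\ge 1$. Then \eqref{eq:budgetlowerbound} gives $\ell(v)\ge\gamma\cdot 2y_v$ for all $v\notin V(B)$. Therefore
\[
c(E(C)) \ < \ (1-\epsilon)\,\lb{V(C)\setminus\tilde W_i},
\]
so the part of $C$ outside $\tilde W_i$ pays for all of $C$ and leaves an $\epsilon$-fraction of its budget over. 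The hypothesis $V(C)\setminus\tilde W_i\neq\emptyset$ makes this surplus at least $\epsilon\min\{\ell(v)\}$.

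\textbf{Step 2: the new initialization.} I would take $\tilde H'$ to consist of $\tilde H[\tilde W_i]\cupp E(C)$ together with $\tilde H[\tilde W_j]$ for every component $\tilde W_j$ not touched by $C$. For each $j\neq i$ with $V(C)\cap\tilde W_j\neq\emptyset$ (necessarily $j>i$, by minimality of the index), the edges $\tilde H[\tilde W_j]$ are dropped. The vertices of $\tilde W_j\setminus V(C)$ then become singleton components. Each such singleton has slack $\ell(v)>0$.

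$\tilde H'$ is Eulerian, local, and lies in $E[V\setminus V(B)]$. It is also light. Untouched components keep their slack, and singletons have positive slack. The merged component $W':=\tilde W_i\cup V(C)$ has
\[
\slack{W',\tilde H'} \ = \ s+\lb{V(C)\setminus\tilde W_i}-c(E(C)) \ \ge \ s+\delta,
\qquad \delta:=\epsilon\,\lb{V(C)\setminus\tilde W_i}.
\]

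\textbf{Step 3: potential increase (main obstacle).} We need $\potPhi{\tilde H'}-\potPhi{\tilde H}>\min_v\ell(v)^p$. The gain is $(s+\delta)^p-s^p$, plus the $\ell(v)^p$ of the new singletons. The loss is $\sum_j s_j^p$ over the dropped components, where $s_j\le s$ by the sorting.

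The gain satisfies $(s+\delta)^p-s^p\ge\max\{\delta^p,\,p\,s^{p-1}\delta\}$. The loss is at most $s^{p-1}\sum_j s_j$. The difficulty is that $s_j$ is bounded only by $\ell(\tilde W_j)$, whereas $\delta$ only sees $\ell(V(C)\cap\tilde W_j)$.

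I would resolve this per component. If $s_j$ is large, say $s_j>\ell(V(C)\cap\tilde W_j)$, then I instead merge the whole $\tilde W_j$ with its edges $\tilde H[\tilde W_j]$ into $W'$. This contributes $s_j\ge\ell(V(C)\cap\tilde W_j)$ to the slack of $W'$ and loses nothing. Only components with $s_j\le\ell(V(C)\cap\tilde W_j)$ are dropped. Hence the dropped slack is at most $\delta/\epsilon$, and the loss is at most $s^{p-1}\delta/\epsilon$.

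With $p=\lceil 2/\epsilon\rceil\ge 2/\epsilon$, the term $p\,s^{p-1}\delta\ge\frac{2}{\epsilon}s^{p-1}\delta$ dominates this loss with room to spare. If $s$ is small relative to $\delta$, the term $\delta^p\ge(\epsilon\min\ell)^p$ handles it instead. The remaining factor-$\epsilon^p$ gap against $\min\ell^p$ must come from the surplus term $\frac{1}{\epsilon}s^{p-1}\delta$, or from treating the small-$s$ case with more care. I expect this bookkeeping, and checking the constants against the definition of ``significantly better'', to be the delicate part. If it fails as stated, the fallback is to compare against $\tilde H$ restricted to the union of the touched components only.

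Everything above is explicit, so $\tilde H'$ is computed in polynomial time.
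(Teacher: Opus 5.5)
\textbf{Gap: the case $s<\min\ell$.} Your Steps 1 and 2 are correct. The initialization you arrive at in Step 3 is essentially the one the paper obtains from Lemmas~\ref{lemma:newbetterslack} and~\ref{lemma:improved_initialization}: merge $C$ into $\tilde W_i$, also merge each touched $\tilde W_j$ with $s_j>\ell(V(C)\cap\tilde W_j)$, and drop the others. The genuine gap is the case $s<m:=\min\{\ell(v):v\in V\setminus V(B)\}$, which you leave open. No bookkeeping can close it, because there your initialization need not be significantly better at all. Take unit costs with $x(\delta^-(v))=1$, so $\ell\equiv\gamma=m$, and take $\gamma=1+\sqrt2$, $\epsilon\le\frac1{10}$. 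Let $\tilde W_i=\{w_1,w_2\}$ and $\tilde W_j=\{u_1,u_2\}$ each carry two copies of a $2$-cycle, so each has slack $2\gamma-4$. Let $C$ be the circuit $w_1\to u_1\to w_2\to u_2\to w_1$. Then:
\begin{itemize}
\item \eqref{eq:exploringcircuit} holds, since $\gamma(1-\epsilon)>2$;
\item $\tilde W_j$ is dropped but lies inside $V(C)$, so no singletons arise;
\item the merged component has slack $4\gamma-8<\gamma$, so $\Phi$ grows by less than $\gamma^p=m^p$.
\end{itemize}
The missing idea is a different initialization for this case. Singleton components have slack $\ell(v)\ge m$, so $s<m$ forces $|\tilde W_i|\ge 2$. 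Then $\tilde H':=\tilde H\setminus\tilde H[\tilde W_i]$ splits $\tilde W_i$ into at least two singletons and raises $\Phi$ by at least $2m^p-s^p>m^p$.

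\textbf{Accounting for $s\ge m$.} In this case your construction works, but the accounting is wrong as written. A merged $\tilde W_j$ does not ``lose nothing'': its term $s_j^p$ also disappears from $\Phi$. Meanwhile it adds only $s_j-\ell_j\ge 0$ to the slack of $W'$, where $\ell_j:=\ell(V(C)\cap\tilde W_j)$. The fix is the paper's $\frac{\epsilon}{2}$-splitting. For merged $j$ you have $\epsilon\ell_j+(s_j-\ell_j)\ge\frac{\epsilon}{2}(\ell_j+s_j)$, and for dropped $j$ you have $\epsilon\ell_j\ge\frac{\epsilon}{2}(\ell_j+s_j)$. So the new component has slack more than $s+\frac{\epsilon}{2}\bigl(\ell(V(C)\setminus\tilde W_i)+\sum_j s_j\bigr)$, summed over all touched $j\ne i$. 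Now use $p\ge 2/\epsilon$, $s\ge s_j$, $s\ge m$ and $\ell(V(C)\setminus\tilde W_i)\ge m$. The binomial bound then shows that the $p$-th power of this slack exceeds $s^p$ by more than $m^p+\sum_j s_j^p$, which pays for every touched component. The ``factor-$\epsilon^p$ gap'' you mention does not occur here, since $\frac1\epsilon s^{p-1}\delta=s^{p-1}\ell(V(C)\setminus\tilde W_i)\ge m^p$. Only the case $s<m$ needs a new construction.
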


We prove this in two steps. The following is a variant of Lemma 6.9 in \cite{TraV25}:

\begin{restatable}{lemma}{lemmaimprovedinitialization}\label{lemma:improved_initialization}
Let $D$ be a connected Eulerian multi-subgraph of $G[V\setminus V(B)]$ with 
$V(D)\setminus \tilde W_{\mathrm{ind}(D)}\not=\emptyset$ and
\begin{equation}\label{eq:atsp_D_large_and_light}
\slack{V(D),E(D)} \ > \ \slack{\tilde W_{\mathrm{ind}(D)},\tilde H} + \epsilon \cdot \lb{V(D)\setminus \tilde W_{\mathrm{ind}(D)}} .
\end{equation}
Then we can compute in polynomial time a light  Eulerian multi-set $\tilde H '\subseteq \tilde H \cupp E(D)$ such that 
\begin{equation}\label{eq:potentialincrease}
 \potPhi{\tilde H'} - \potPhi{\tilde H} \ > \ \min\bigl\{ \lbonly(v)^{p} : v\in V\setminus V(B) \bigr\}. 
\end{equation}
\end{restatable}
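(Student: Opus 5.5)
Let $i:=\ind(D)$. Note $i\ge 1$, because $D$ avoids $V(B)$. Write $s_j:=\slack{\tilde W_j,\tilde H}$, so $s_i\ge s_{i+1}\ge\cdots\ge s_k\ge 0$ by the sorting and the lightness of $\tilde H$. Set $L_{\rm out}:=\lb{V(D)\setminus\tilde W_i}$.

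\textbf{Construction.} Let $J$ be the set of indices $j\ge i$ with $\tilde W_j\cap V(D)\neq\emptyset$; by definition of the index, no $j<i$ meets $V(D)$. Start from $\tilde H$. Delete the edges $\tilde H[\tilde W_j]$ for all $j\in J$ and add $E(D)$, giving $\tilde H'$. This multi-set is Eulerian and is contained in $\tilde H\cupp E(D)$. Its components are:
\begin{itemize}
\item the untouched $\tilde W_j$ with $j\notin J$;
\item the new component $V(D)$, with slack $\slack{V(D),E(D)}>s_i+\epsilon L_{\rm out}\ge 0$;
\item singletons $\{v\}$ for $v\in\bigcup_{j\in J}\tilde W_j\setminus V(D)$, each with slack $\lbonly(v)>0$.
\end{itemize}
Hence $\tilde H'$ is light, and also local and avoids $V(B)$ if $D$ does, as the lemma allows.

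\textbf{Potential accounting.} The potential change is at least
\[
\bigl(s_i+\epsilon L_{\rm out}\bigr)^p-s_i^p-\sum_{j\in J\setminus\{i\}}s_j^p ,
\]
plus the nonnegative singleton terms $\sum_v\lbonly(v)^p$. By convexity the first difference is at least $p\,\epsilon L_{\rm out}\,\max\{s_i,\epsilon L_{\rm out}\}^{p-1}$, and $p\epsilon\ge 2$. For $j\in J\setminus\{i\}$ we have $s_j^p\le \max\{s_i,\epsilon L_{\rm out}\}^{p-1}s_j$, so it suffices to show
\[
\sum_{j\in J\setminus\{i\}}s_j\ \le\ L_{\rm out}
\]
with a strict margin that absorbs $\min_v\lbonly(v)^p$. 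The margin comes from $L_{\rm out}\ge\min_v\lbonly(v)$ (since $V(D)\setminus\tilde W_i\neq\emptyset$), together with $p\epsilon\ge2$, which leaves a surplus of $p\epsilon L_{\rm out}-L_{\rm out}\ge L_{\rm out}$ times $\epsilon^{p-1}L_{\rm out}^{p-1}$.

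\textbf{Main obstacle.} The crude bound $s_j\le \lb{\tilde W_j}$ is not enough, because $\tilde W_j$ may lie mostly outside $V(D)$, while $L_{\rm out}$ only counts $\lb{\tilde W_j\cap V(D)}$. I would treat these $j$ differently. Split $J\setminus\{i\}$ into:
\begin{itemize}
\item $J_1$: those $j$ with $s_j\le \lb{\tilde W_j\cap V(D)}$;
\item $J_2$: the rest.
\end{itemize}
Only $J_1$ is dissolved as above, and its slacks sum to at most $L_{\rm out}$. For $j\in J_2$ I keep $\tilde H[\tilde W_j]$ and merge $\tilde W_j$ into the new component. Then the merged component has slack
\[
\slack{V(D),E(D)}+\sum_{j\in J_2}\bigl(s_j-\lb{\tilde W_j\cap V(D)}\bigr),
\]
which is only larger, and the old terms $s_j^p$ for $j\in J_2$ are now dominated via $s_j\le s_i$. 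I expect the delicate point to be verifying the strict inequality \eqref{eq:potentialincrease} after this merging, using superadditivity $(a+b)^p\ge a^p+b^p$. If the merging bookkeeping fails, I would fall back on the argument of Lemma 6.9 in \cite{TraV25}, adapted to the extra $\epsilon L_{\rm out}$ term.
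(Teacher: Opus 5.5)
\textbf{Verdict: there is a genuine gap.} Your merge/dissolve split (merge the components in $J_2$, dissolve $\tilde W_i$ and those in $J_1$) is essentially the paper's set $J$. The gap is in where the margin $m^p$, with $m:=\min\{\lbonly(v): v\in V\setminus V(B)\}$, comes from.

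\textbf{The margin argument is wrong.} The convexity bound $(a+b)^p-a^p\ge p\,b\max\{a,b\}^{p-1}$ fails when $b>a$: for $a=0$ you get $b^p$, not $p\,b^p$. Even if you grant it, the branch $\epsilon L_{\mathrm{out}}>s_i$ only gives a surplus of order $\epsilon^{p-1}L_{\mathrm{out}}^p$. That is far below $m^p$, because $L_{\mathrm{out}}$ may be as small as $m$. The only usable margin term is $s_i^{p-1}L_{\mathrm{out}}$, from $(a+b)^p\ge a^p+pa^{p-1}b$, and it is at least $m^p$ only if $s_i\ge m$.

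\textbf{Your construction actually fails when $s_i<m$.} Here is an instance allowed by the hypotheses:
\begin{itemize}
\item $V(D)=\tilde W_i\cup\tilde W_j$, where $\tilde W_j$ consists of two vertices of budget $m$, so $L_{\mathrm{out}}=2m$;
\item $s_i=s_j=\delta$ with $\delta>0$ tiny;
\item $\slack{V(D),E(D)}=2\delta+2\epsilon m$, which satisfies the hypothesis.
\end{itemize}
Then $j\in J_1$, no singletons arise, and $\Phi$ rises by $(2\delta+2\epsilon m)^p-2\delta^p\approx(2\epsilon)^p m^p<m^p$.

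The missing idea is the paper's preliminary case split. If $s_i<m$, then $|\tilde W_i|\ge 2$, since a singleton has slack at least $m$. Simply deleting $\tilde H[\tilde W_i]$ creates at least two singletons, so $\Phi$ increases by at least $2m^p-s_i^p>m^p$. You do not need $D$ at all in this case.

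\textbf{The merging bookkeeping for $s_i\ge m$ is not carried out.} Superadditivity is the wrong tool. For merged $j$, the increments $s_j-\lb{\tilde W_j\cap V(D)}$ can be arbitrarily small, so they cannot pay for $s_j^p$. You must spend part of $\epsilon L_{\mathrm{out}}$ on them. The paper does this by scaling:
\begin{itemize}
\item Merged terms are nonnegative, so you may multiply them by $\epsilon/2$.
\item Dissolved terms are nonpositive, so you may add them with factor $\epsilon/2$.
\item This gives $\slack{V(D^*),E(D^*)}\ge\slack{V(D),E(D)}+\frac{\epsilon}{2}\sum_{j}\bigl(s_j-\lb{\tilde W_j\cap V(D)}\bigr)$, where $j$ ranges over all indices other than $i$ meeting $V(D)$ (your $J\setminus\{i\}$). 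This works in your variant too, where $\tilde W_i$ is always dissolved.
\item Since these $\lb{\tilde W_j\cap V(D)}$ sum to exactly $L_{\mathrm{out}}$, the hypothesis yields $\slack{V(D^*),E(D^*)}>s_i+\frac{\epsilon}{2}L_{\mathrm{out}}+\frac{\epsilon}{2}\sum_{j}s_j$.
\end{itemize}
Finish with $(a+b)^p\ge a^p+pa^{p-1}b$, using $p\epsilon/2\ge 1$, $s_i^{p-1}s_j\ge s_j^p$, and $s_i^{p-1}L_{\mathrm{out}}\ge m^p$ (the last needs $s_i\ge m$).
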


The proof is similar to the proof of Lemma 6.9 in \cite{TraV25}; we defer it to Section~\ref{section:proofofimprovedinit}.
We combine this with the following: 

\begin{lemma} \label{lemma:newbetterslack}
Let $C$ be a local circuit in $G[V\setminus V(B)]$ with 
$V(C)\setminus \tilde W_{\mathrm{ind}(C)}\not=\emptyset$ and \eqref{eq:exploringcircuit}.
Then the digraph $D$ with
$V(D)=\tilde W_{\ind(C)}\cup V(C)$ and $E(D)=\tilde H[\tilde W_{\ind(C)}]\cupp E(C)$ 
satisfies \eqref{eq:atsp_D_large_and_light}.
\end{lemma}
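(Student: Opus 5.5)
Write $i:=\ind(C)$, $W:=\tilde W_i$, $U:=V(C)\setminus W\neq\emptyset$ and $I:=V(C)\cap W$. The plan is a direct computation: expand both slacks in \eqref{eq:atsp_D_large_and_light}, cancel the common part coming from $\tilde H[W]$, and use \eqref{eq:exploringcircuit} to control the rest.

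First, $D$ is a connected Eulerian multi-subgraph of $G[V\setminus V(B)]$. Indeed, $C$ is a circuit meeting $W$ (by the definition of $\ind(C)$), and $\tilde H[W]$ is a connected Eulerian component of the initialization. Its index is $i$, because $U$ only meets components $\tilde W_j$ with $j\ge i$ (again by minimality in the definition of $\ind(C)$). In particular $V(D)\setminus\tilde W_{\ind(D)}=U\ne\emptyset$. Since $V(D)=W\,\dot\cup\, U$ and $E(D)=\tilde H[W]\cupp E(C)$, we get
\[
\slack{V(D),E(D)} \;=\; \lb{W}+\lb{U}-c(\tilde H[W])-c(E(C)) \;=\; \slack{W,\tilde H}+\lb{U}-c(E(C)).
\]
So \eqref{eq:atsp_D_large_and_light} reduces to showing $(1-\epsilon)\,\lb{U} > c(E(C))$.

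Next we bound $c(E(C))$ from above, using that $C$ is local. Every edge $(v,w)$ of $C$ costs $y_v+y_w$, and each vertex of the circuit is the head of exactly one edge and the tail of exactly one edge. Hence
\[
c(E(C))=\sum_{v\in V(C)}2y_v=\sum_{v\in I}2y_v+\sum_{v\in U}2y_v.
\]
Now \eqref{eq:exploringcircuit} says $c(E(C))>\bigl(1+\frac{1}{\gamma(1-\epsilon)-1}\bigr)\sum_{v\in I}2y_v$. Write $S_I:=\sum_{v\in I}2y_v$, $S_U:=\sum_{v\in U}2y_v$ and $\beta:=\gamma(1-\epsilon)-1$. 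Since $\gamma\ge\frac{2}{1-\epsilon}$, we have $\beta\ge 1>0$. The inequality $S_I+S_U>(1+1/\beta)S_I$ gives $S_U>S_I/\beta$, that is, $S_I<\beta S_U$. Therefore
\[
c(E(C))=S_I+S_U<(\beta+1)S_U=\gamma(1-\epsilon)S_U.
\]

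It remains to show $\lb{U}\ge\gamma S_U$. By the definition of the budgets, for general ATSP $\lbonly(v)\ge\gamma\cdot 2y_v$ for every $v$. For \textsc{Asymmetric Graph TSP}, $\lbonly(v)=\gamma\,x(\delta^-(v))$ and $2y_v=1$, so we need $x(\delta^-(v))\ge 1$. This should follow from the cut constraint $x(\delta(\{v\}))\ge 2$ together with flow conservation at $v$. Combining, $(1-\epsilon)\lb{U}\ge\gamma(1-\epsilon)S_U>c(E(C))$, as required.

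The only delicate step is this last budget comparison in the graph case, where we rely on $x(\delta^-(v))\ge1$ from the LP constraints rather than on \eqref{eq:budgetlowerbound}. The second point needing a little care is verifying that $\ind(D)=i$ and that $D$ is connected.
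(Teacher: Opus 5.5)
Your proof is correct and matches the paper's: you expand $\slack{V(D),E(D)}=\slack{\tilde W_i,\tilde H}+\lb{V(C)\setminus\tilde W_i}-\cost{E(C)}$, use locality to get $\cost{E(C)}=\sum_{v\in V(C)}2y_v$, and combine \eqref{eq:exploringcircuit} with $\lbonly(v)\ge\gamma\cdot 2y_v$ to obtain $(1-\epsilon)\,\lb{V(C)\setminus\tilde W_i}>\cost{E(C)}$. Your extra checks, that $\ind(D)=i$ and that $x(\delta^-(v))\ge 1$ in the graph case (both used only implicitly in the paper), are also correct.
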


\begin{proof}
We write $i:=\mathrm{ind}(C)$.
First, note that \eqref{eq:exploringcircuit} yields
\begin{align*}
\sum_{v\in V(C)\setminus \tilde W_i} 2y_v &\ = \ \sum_{v\in V(C)} 2y_v - \sum_{v\in V(C)\cap \tilde W_i} 2y_v \\
&\ = \ \cost{E(C)} - \sum_{v\in V(C)\cap \tilde W_i} 2y_v \\
&\ > \ \sfrac{1}{\gamma(1-\epsilon)-1} \cdot \sum_{v\in V(C)\cap\tilde W_i } 2y_v.
\end{align*}
Next (using the above in the strict inequality),
\begin{align*}
(1-\epsilon) \cdot \lb{V(C)\setminus \tilde W_i}
&\ \ge \ \gamma (1-\epsilon) \cdot \sum_{v\in V(C)\setminus \tilde W_i } 2y_v \\
&\ = \ \sum_{v\in V(C)\setminus \tilde W_i } 2y_v + (\gamma(1-\epsilon)-1) \cdot \sum_{v\in V(C)\setminus \tilde W_i } 2y_v \\
&\ > \ \sum_{v\in V(C)\setminus \tilde W_i } 2y_v + \sum_{v\in  V(C)\cap\tilde W_i } 2y_v \\
&\ = \ \cost{E(C)}. 
\end{align*}
This implies:
\begin{align*}
\slack{V(D),E(D)} &\ = \ \slack{\tilde W_i,\tilde H} + \lb{V(C)\setminus \tilde W_i} - \cost{E(C)} \\
&\ > \ \slack{\tilde W_i,\tilde H} + \epsilon \cdot \lb{V(C)\setminus \tilde W_i}.    
\qedhere
\end{align*}
\end{proof}

Lemma~\ref{lemma:newbetterinit} now follows easily:
Given $C$, we construct the connected Eulerian local multi-subgraph $D$ 
as in Lemma~\ref{lemma:newbetterslack}.
By Lemma~\ref{lemma:improved_initialization}, we obtain a significantly better initialization.

\subsection{New Step \protect\smallcircled{2} of Svensson's Algorithm}

Each outer iteration of Svensson's algorithm computes (in Step \smallcircled{1}) 
a subtour cover $F$ for some Eulerian multi-subset $H$ of $E$.
Before proceeding with this, we add the following steps:
\begin{itemize}
\item Decompose $F$ into circuits: write $\chi^F= \sum_{C\in\Fscr} \chi^C$, where $\Fscr$ is a set of circuits in $G$ and
$\chi^F\in\mathbb{Z}_{\ge 0}^{E}$ denotes the incidence vector of $F$: $\chi^F_e$ is the number of copies of $e$ in $F$.
\item Check for each circuit $C\in\Fscr$ whether $C$ is completely contained in a connected component of $(V\setminus V(B),H)$;
if so, remove $C$ from $F$.
Note that this preserves the guarantees from Theorem~\ref{thm:subtour_cover}. 
\item Check for each circuit $C\in\Fscr$ with $\ind(C)>0$ whether $C$ satisfies \eqref{eq:exploringcircuit};
if so, obtain a sig\-nificantly better initialization $\tilde H'$ by Lemma~\ref{lemma:newbetterinit} and restart Svensson's algorithm with 
$\tilde H:=\tilde H'$.
\item Check for each circuit $C\in\Fscr$ with $\ind(C)>0$ whether $D=C$ satisfies \eqref{eq:atsp_D_large_and_light};
if so, obtain a sig\-nificantly better initialization $\tilde H'$ by Lemma~\ref{lemma:improved_initialization} and restart Svensson's algorithm with 
$\tilde H:=\tilde H'$.
\end{itemize}

After these steps,
let $\Fscr_i=\{C\in\Fscr:\mathrm{ind}(C)=i\}$ and $F_i$ the disjoint union of the edge sets of the circuits in $\Fscr_i$
(i.e., $\chi^{F_i}= \sum_{C\in\Fscr_i} \chi^C$).
Note that this definition differs from the previous works: 
there, $F_i$ was defined to be the union of the \emph{connected components} $D$ of $(V,F)$ with $\mathrm{ind}(D)=i$.

\begin{theorem}
Unless the above results in a significantly better initialization, we have
\begin{equation}\label{eq:circuitshavesmallslack}
 \slack{V(C),E(C)} \ \le \ \slack{\tilde W_{\mathrm{ind}(C)},\tilde H} + \epsilon \cdot \lb{V(C)}
\end{equation}
for every circuit $C\in\Fscr$ with $\ind(C)>0$ and
\begin{equation}\label{eq:bound_on_Fi}
 \cost{F_i} \ \le \ \sfrac{2(1-\epsilon)}{\gamma(1-\epsilon)-1} \cdot \lb{\tilde W_i}
\end{equation}
for all $i=1,\ldots,k$.
\end{theorem}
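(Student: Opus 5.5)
The plan is to read off both inequalities from the fact that the new checks in Step \smallcircled{2} did not trigger a restart. One preliminary observation is used throughout. The current $H$ arises from $\tilde H$ by adding edges, so $\tilde H\subseteq H$. Hence every connected component of $(V\setminus V(B),\tilde H)$, in particular every $\tilde W_i$, lies inside a single connected component of $(V\setminus V(B),H)$. Every circuit contained in a component of $H$ has been removed from $\Fscr$. So every remaining $C\in\Fscr$ with $\ind(C)=i>0$ satisfies $V(C)\not\subseteq \tilde W_i$, i.e.\ $V(C)\setminus\tilde W_{\ind(C)}\neq\emptyset$. This is exactly the side condition needed in Lemma~\ref{lemma:improved_initialization} and Lemma~\ref{lemma:newbetterinit}.

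For \eqref{eq:circuitshavesmallslack}: let $C\in\Fscr$ with $\ind(C)>0$. Then $D=C$ is a connected Eulerian multi-subgraph of $G[V\setminus V(B)]$, because $\ind(C)>0$ means $C$ avoids $\tilde W_0=V(B)$. It also satisfies the side condition by the observation above. Since no restart happened, \eqref{eq:atsp_D_large_and_light} fails for $D=C$, so
\[
\slack{V(C),E(C)} \ \le \ \slack{\tilde W_{\ind(C)},\tilde H}+\epsilon\cdot\lb{V(C)\setminus\tilde W_{\ind(C)}}.
\]
Since $\lbonly>0$, we have $\lb{V(C)\setminus\tilde W_{\ind(C)}}\le\lb{V(C)}$, which gives \eqref{eq:circuitshavesmallslack}.

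For \eqref{eq:bound_on_Fi}: fix $i\ge1$. Each $C\in\Fscr_i$ is local and lies in $G[V\setminus V(B)]$. (Locality is ensured by the subtour cover property for components meeting $B$, and I would also check that circuits in components not meeting $B$ are local.) Since \eqref{eq:exploringcircuit} failed, and writing $\beta:=\gamma(1-\epsilon)-1$, we get
\[
\cost{E(C)}\le \bigl(1+\sfrac{1}{\beta}\bigr)\sum_{v\in V(C)\cap\tilde W_i}2y_v .
\]
Summing over $C\in\Fscr_i$ and exchanging sums, each $v\in\tilde W_i$ with $y_v>0$ is counted once per circuit of $\Fscr_i$ through it. Because the circuits partition $F$, that count is at most $|F\cap\delta^-(v)|$. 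By \eqref{eq:indegree2subtourcover} this is at most $2x(\delta^-(v))$. Then \eqref{eq:budgetlowerbound} gives $2x(\delta^-(v))\cdot 2y_v\le \sfrac{2}{\gamma}\lbonly(v)$. Therefore
\[
\cost{F_i}\ \le\ \sfrac{\gamma(1-\epsilon)}{\beta}\cdot\sfrac{2}{\gamma}\,\lb{\tilde W_i}\ =\ \sfrac{2(1-\epsilon)}{\gamma(1-\epsilon)-1}\,\lb{\tilde W_i}.
\]

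The main obstacle is justifying the in-degree bound \eqref{eq:indegree2subtourcover} at every relevant $v$. It is guaranteed only for vertices in components of $(V,F)$ disjoint from $V(B)$. A circuit of index $i>0$ avoids $V(B)$, but its component in $(V,F)$ might still meet $V(B)$. I would first try to exclude or absorb this case via the locality of such components together with the removal step, which the paper asserts preserves the guarantees of Theorem~\ref{thm:subtour_cover}. Failing that, I would check whether \eqref{eq:indegree2subtourcover} in fact holds for all non-backbone vertices in the construction of \cite{TraV25}. The remaining steps are straightforward bookkeeping.
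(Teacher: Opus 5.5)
Your proposal is correct and is the paper's proof step by step. You get \eqref{eq:circuitshavesmallslack} from the failed check \eqref{eq:atsp_D_large_and_light} for $D=C$; your observation that the removal step forces $V(C)\setminus\tilde W_{\ind(C)}\neq\emptyset$ is only implicit in the paper. You get \eqref{eq:bound_on_Fi} by summing the failed check \eqref{eq:exploringcircuit} over $\Fscr_i$, rewriting the vertex sum as $\sum_{v\in\tilde W_i}|F_i\cap\delta^-(v)|\cdot 2y_v$, and applying \eqref{eq:indegree2subtourcover} and \eqref{eq:budgetlowerbound}.

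The paper also applies \eqref{eq:indegree2subtourcover} at every $v\in\tilde W_i$ without discussing the case you flag. The locality clause in its definition of a subtour cover is evidently meant for components with $V(D)\cap V(B)=\emptyset$, the same components covered by \eqref{eq:indegree2subtourcover}, so your two concerns are one. That case does not matter for how \eqref{eq:bound_on_Fi} is used: a positive-index circuit in a component of $(V,F)$ meeting $V(B)$ puts all of $F_i$ into the index-$0$ component of $(V,E(B)\cupp H\cupp F\cupp X)$. That component is added to $H$ only in the last iteration, together with $F_0\neq\emptyset$, and there Lemma~\ref{lemma:Fedges} bounds all of $F$ by \eqref{eq:subtourcover_totalcost} instead of \eqref{eq:bound_on_Fi}.
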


\begin{proof}
Let $C\in\Fscr$ with $\ind(C)>0$. 
Since \eqref{eq:atsp_D_large_and_light} does not hold for $D=C$, we have \eqref{eq:circuitshavesmallslack}.

Moreover, \eqref{eq:exploringcircuit} does not hold for $C$, i.e.,
\begin{equation}\label{eq:nonexploringcircuit}
 \cost{E(C)} \ \le \ \left(1+\sfrac{1}{\gamma(1-\epsilon)-1}\right) \cdot \sum_{v\in V(C)\cap\tilde W_{\mathrm{ind}(C)} } 2y_v.
\end{equation}
for all $C\in\Fscr$ with $\ind(C)>0$. 

Let $i\in\{1,\ldots,k\}$.  
Summing the inequalities \eqref{eq:nonexploringcircuit} for all $C\in\Fscr_i$ yields
\begin{equation*}
 \cost{F_i} \ \le \ \left(1+\sfrac{1}{\gamma(1-\epsilon)-1}\right) \cdot \sum_{C\in\Fscr_i} \sum_{v\in V(C)\cap\tilde W_i } 2y_v
 \ = \ \sfrac{\gamma(1-\epsilon)}{\gamma(1-\epsilon)-1} \cdot \sum_{C\in\Fscr_i} \sum_{v\in V(C)\cap\tilde W_i } 2y_v.
\end{equation*}
On the other hand, \eqref{eq:indegree2subtourcover} yields
\begin{equation*}
\sum_{C\in\Fscr_i} \sum_{v\in V(C)\cap\tilde W_i } 2y_v
\ = \ \sum_{v\in\tilde W_i} \bigl| F_i\cap \delta^-(v) \bigr| \cdot 2y_v
\ \le \ \sum_{v\in\tilde W_i} \bigl| F\cap \delta^-(v) \bigr| \cdot 2y_v
\ \le \ \sum_{v\in\tilde W_i} 2\cdot x(\delta^-(v)) \cdot 2y_v.
\end{equation*}
Together (and using \eqref{eq:budgetlowerbound}) we get
\begin{equation*}
 \cost{F_i} \ \le \ \sfrac{2\gamma(1-\epsilon)}{\gamma(1-\epsilon)-1} \cdot \sum_{v\in\tilde W_i} x(\delta^-(v)) \cdot 2y_v
 \ \le \ \sfrac{2(1-\epsilon)}{\gamma(1-\epsilon)-1} \cdot \lb{\tilde W_i}.
 \qedhere
\end{equation*}
\end{proof}

\subsection{From More Slack to Better Initialization: Proof of Lemma~\ref{lemma:improved_initialization}}\label{section:proofofimprovedinit}

Lemma~\ref{lemma:improved_initialization} is a variant of the main lemma that \cite{TraV25} used to find a better initialization $\tilde H'$
(their Lemma 6.9).
We are able to find such a better initialization whenever we have a connected Eulerian multi-subgraph $D$ of $G[V\setminus V(B)]$ 
that has significantly larger slack than every connected component of $\bigl( V\setminus V(B),\tilde H \bigr)$ it intersects. 
The main difference to Lemma 6.9 in \cite{TraV25} is the term 
$\lb{V(D)\setminus \tilde W_{\mathrm{ind}(D)}}$ instead of $\lb{V(D)}$ on the right-hand side of \eqref{eq:atsp_D_large_and_light}.
On the other hand, the conclusion in \eqref{eq:potentialincrease} is slightly weaker than in \cite{TraV25}, but this is irrelevant.
The construction of $D^*$ in the proof is identical too, but the last part of the proof is different.

\lemmaimprovedinitialization*

\begin{proof}
If $\slack{\tilde W_{\mathrm{ind}(D)},\tilde H} < \min \bigl\{ \lbonly(v) : v\in V\setminus V(B) \bigr\}$,
we simply set $\tilde H' := \tilde H\setminus \tilde H[\tilde W_{\mathrm{ind}(D)}]$, which does the job
because we replace the connected component induced by $\tilde W_{\mathrm{ind}(D)}$ by
at least two singletons, each of which has slack at least $\min\bigl\{ \lbonly(v) : v\in V\setminus V(B) \bigr\}$.
Henceforth, we assume
$\slack{\tilde W_{\mathrm{ind}(D)},\tilde H} \ge \min \bigl\{ \lbonly(v) : v\in V\setminus V(B) \bigr\}$.

Let $I := \bigl\{ j\in\{1,\ldots,k\}: V(D)\cap \tilde W_j\not=\emptyset \bigr\}$ and $i:=\min I =\mathrm{ind}(D)$.
Let
\begin{equation}
J \ := \ \left\{ j\in I \ : \ \lb{\tilde W_j \cap V(D)} \ \le \ \slack{\tilde W_j,\tilde H} \right\}.
\end{equation}
We replace the components $\tilde H[\tilde W_j]$ for $j \in I$
by one new component that is the union of $E(D)$ and all $\tilde H[\tilde W_j]$ with $j\in J$, as well as possibly some singleton components. 
More precisely, we set 
\begin{equation*}
\tilde H ' \ :=\ \bigcup_{h\in \{1,\ldots,k\}\setminus I} \tilde H[\tilde W_h]\ \cupp\ E(D)\ \cupp\ \bigcup_{j\in J} \tilde H[\tilde W_j].
\end{equation*}

Let $D^*$ be the connected component of $(V, \tilde H')$ with edge set
\begin{equation*}
 E(D) \cupp \bigcup_{j\in J} \tilde H[\tilde W_j].
\end{equation*}
Then
\begin{equation*} 
\begin{aligned}
\slack{V(D^*),E(D^*)} &\ = \ \slack{V(D),E(D)} 
+ \sum_{j\in J} \left( \slack{\tilde W_j,\tilde H} - \lb{\tilde W_j \cap V(D)}  \right) \\
&\ \ge \ \slack{V(D),E(D)},
\end{aligned}
\end{equation*}
so $E(D^*)$ is light, and hence $\tilde H '$ is also light.

We will now show \eqref{eq:potentialincrease}. To this end, we will first prove another lower bound on $\slack{V(D^*),E(D^*)}$.
Using the definition of $J$ and \eqref{eq:atsp_D_large_and_light}, we get
\begin{equation} \label{eq:lowerboundslack_newcomponent2}
\begin{aligned} 
& \hspace*{-8mm} 
\slack{V(D^*),E(D^*)} \\[1mm]
& = \ \slack{V(D),E(D)} + \sum_{j\in J} \left( \slack{\tilde W_j,\tilde H} - \lb{\tilde W_j \cap V(D)} \right) \\
& \ge \ \slack{V(D),E(D)} + \sfrac{\epsilon}{2} \sum_{j\in J\setminus\{i\}} \left( \slack{\tilde W_j,\tilde H} - \lb{\tilde W_j \cap V(D)}  \right)  \\
& \ge \ \slack{V(D),E(D)} + \sfrac{\epsilon}{2} \sum_{j\in I\setminus\{i\}} \left( \slack{\tilde W_j,\tilde H} - \lb{\tilde W_j \cap V(D)}  \right) \\
& > \ \slack{\tilde W_i,\tilde H} + \epsilon \cdot \lb{V(D)\setminus\tilde W_i}  
+ \sfrac{\epsilon}{2} \sum_{j\in I\setminus\{i\}} \left( \slack{\tilde W_j,\tilde H} - \lb{\tilde W_j \cap V(D)} \right) \\
& = \ \slack{\tilde W_i,\tilde H} + \sfrac{\epsilon}{2} \cdot \lb{V(D)\setminus\tilde W_i}
+ \sfrac{\epsilon}{2} \sum_{j\in I\setminus\{i\}} \slack{\tilde W_j,\tilde H}.
 \end{aligned}
\end{equation}

Using \eqref{eq:lowerboundslack_newcomponent2}  
and the definition of $p\ge  \sfrac{2}{\epsilon}$, we get
\begin{align*}
&  \hspace*{-8mm} \slack{V(D^*),E(D^*)}^{p} \\
 &\ > \ \Biggl( \slack{\tilde W_i,\tilde H} + \sfrac{\epsilon}{2} \cdot \lb{V(D)\setminus\tilde W_i}
 + \sfrac{\epsilon}{2} \cdot \sum_{j\in I\setminus\{i\}} \slack{\tilde W_j,\tilde H} \Biggr)^{p} \\
&\ \ge \ \slack{\tilde W_i,\tilde H}^{p} 
+ p \cdot \sfrac{\epsilon}{2} \cdot \slack{\tilde W_i,\tilde H}^{p-1} 
\Biggl( \lb{V(D)\setminus\tilde W_i} + \sum_{j\in I\setminus\{i\}} \slack{\tilde W_j,\tilde H} \Biggr) \\
&\ \ge \ \slack{\tilde W_i,\tilde H}^{p} 
+ \slack{\tilde W_i,\tilde H}^{p-1} \Biggl( \lb{V(D)\setminus\tilde W_i} + \sum_{j\in I\setminus\{i\}} \slack{\tilde W_j,\tilde H} \Biggr) \\
&\ \ge \ \slack{\tilde W_i,\tilde H}^{p} + \min\bigl\{\lbonly(v):v\in V\setminus V(B)\bigr\}^p + \sum_{j\in I\setminus\{i\}} \slack{\tilde W_j,\tilde H}^p,
\end{align*}
where we used 
$\slack{\tilde W_i,\tilde H} \ge\min \bigl\{\lbonly(v) : v\in V\setminus V(B) \bigr\}$ and
$V(D)\setminus\tilde W_i\not=\emptyset$ and
$\slack{\tilde W_i,\tilde H} \ge \slack{\tilde W_j,\tilde H}$ for all $j\in I$ in the last inequality.

We conclude
\begin{align*}
 \potPhi{\tilde H'} - \potPhi{\tilde H} &\ \ge \ \slack{V(D^*),E(D^*)}^p - \sum_{j\in I} \slack{\tilde W_j,\tilde H}^p \\
&\ > \ \min\bigl\{\lbonly(v):v\in V\setminus V(B) \bigr\}^p.
\qedhere
 \end{align*}
\end{proof}

\section{Cost of the Tour Computed by Svensson's Algorithm}

If Svensson's algorithm outputs a tour $E(B)\cupp H$ (and not a significantly better initialization), we need to 
bound the cost of $H$.
Note that $H$ has three types of edges: 
\begin{itemize}
\item Those that are part of the initialization $\tilde H$; they cost at most $\lb{V\setminus V(B)}$ because $\tilde H$ is light;
\item Those that we added in Step~\smallcircled{3} as part of $X$; 
these \emph{$X$-edges} cost at most $\lb{V\setminus V(B)}-c\bigl(\tilde H\bigr)$ (cf.\ Proposition~\ref{prop:Xedges} below);
\item Those that we added in Step~\smallcircled{3} as part of $F$; 
the cost of these \emph{$F$-edges} is bounded by our new Lemma~\ref{lemma:Fedges} below.
\end{itemize}

\begin{proposition}[\cite{TraV25}]
\label{prop:Xedges}
The total cost of the $X$-edges added to $H$ during Svensson's algorithm is at most $\lb{V\setminus V(B)}-c\bigl(\tilde H\bigr)$.
\end{proposition}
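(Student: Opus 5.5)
We will prove the bound by a potential (charging) argument that follows the evolution of the slacks of the components of the current partial solution over the whole run of Svensson's algorithm with a fixed initialization $\tilde H$. For a connected component $Z$ of $(V,E(B)\cupp H)$ with $V(Z)\cap V(B)=\emptyset$, let $\iota(Z)$ be its index, and define its \emph{remaining budget}. The invariant we aim for is that every $X$-edge is paid from budgets $\lbonly(v)$ of vertices $v\notin \tilde W_{\ind(\cdot)}$ that are newly absorbed, or from the initial slack $\slack{\tilde W_j,\tilde H}$ of the component $\tilde W_j$ it merges. In total, each vertex's budget is charged at most once: vertices of $\tilde W_j$ are first charged through $c(\tilde H[\tilde W_j])$, and the remainder is the slack of $\tilde W_j$.

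First we note that only the \emph{last} circuit added in each execution of Step~\smallcircled{3} needs special treatment. Every circuit $C$ added to $X$ satisfies $E(C)\cap\delta(V(Z))\neq\emptyset$ for the component $Z$ of largest index. Hence $C$ connects $Z$ to some other component. After the repeat-loop, the component $Z$ of largest index is the one whose edges $(F\cupp X)[V(Z)]$ are added to $H$. The other components touched by $X$ are discarded, since $X$ is recomputed in the next iteration.

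So we charge each circuit $C$ added to $X$ and kept in $H$ to the component $\tilde W_{\ind(Z)}$ at the time it was added. Its cost is at most $\slack{\tilde W_{\ind(Z)},\tilde H}$ by the selection rule. The key claim is that every index $j\in\{1,\ldots,k\}$ is charged at most once over the entire run. The reason is that once a circuit with budget $\slack{\tilde W_j,\tilde H}$ merges the component of index $j$ into a component of smaller index, no component of index $j$ ever exists again. Components only merge, and the index of a merged component is the minimum of the indices involved. The merge is real because $C$ crosses $\delta(V(Z))$ and $Z$ has the largest index.

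The \textbf{main obstacle} is verifying that the charged circuit actually strictly decreases the index of the component containing $\tilde W_j$, or otherwise gets discarded. The first thing to try is the following observation: within one repeat-loop, the component $Z$ of largest index either merges with something of smaller index, so index $j$ disappears, or merges only with components of index larger than $j$. The second case is impossible, because $Z$ has the largest index. Edges of $X$ in non-final components are not added to $H$, so they are not charged. If $Z$'s merge partner has the same index (both contain vertices of $\tilde W_j$), then the two components both contain $\tilde W_j$-vertices. But $\tilde H[\tilde W_j]\subseteq H$ is connected, so they would already be one component, a contradiction. Hence each kept circuit charged to $j$ kills index $j$, and we obtain
\[
\text{cost of $X$-edges}\ \le\ \sum_{j=1}^k \slack{\tilde W_j,\tilde H}\ =\ \lb{\textstyle\bigcup_j\tilde W_j}-c(\tilde H)\ \le\ \lb{V\setminus V(B)}-c\bigl(\tilde H\bigr),
\]
using that $\tilde H$ is local and Eulerian on $V\setminus V(B)$, so $c(\tilde H)=\sum_j c(\tilde H[\tilde W_j])$. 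Index $0$ (the backbone) is never charged, since it has the smallest index and is never the strict-maximum component unless it is the only one left.
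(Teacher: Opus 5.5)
Your proof is correct and is essentially the paper's argument. Each circuit kept in $H$ is paid from $\slack{\tilde W_{\ind(Z)},\tilde H}$, and that budget is never used again, because the circuit joins $Z$ to a component of strictly smaller index (indices of distinct components differ since $\tilde H\subseteq H$) and all of $(F\cupp X)[V(Z)]$ is kept, while lightness of $\tilde H$ keeps the unused budgets nonnegative; the opening paragraph about ``remaining budgets'' of absorbed vertices and the remark that only the last circuit needs special treatment are never used and can be deleted.
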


\begin{proof}
We assign to each set $\tilde W_i$ ($i=1,\ldots,k$) the budget $ \slack{\tilde W_i,\tilde H}$ 
and show that all $X$-edges can be paid from these budgets.
If we add a circuit $C$ to $X$ and later to $H$, then $c(E(C)) \le \slack{\tilde W_{\ind(Z)},\tilde H}$ and we pay $C$ from the
budget of $\tilde W_{\ind(Z)}$. Since $C$ connects
$Z$ to a connected component of $(V,E(B)\cupp H\cupp F\cupp X)$ with smaller index, 
no other circuit will later be paid from the same budget.
\end{proof}

So the edges in $\tilde H$ and the $X$-edges together cost at most $\lb{V\setminus V(B)}$. 
This is exactly as in \cite{TraV25}. Our improvement stems from a better bound on the cost of the $F$-edges:

\begin{lemma}\label{lemma:Fedges}
The total cost of all $F$-edges that are added to $H$ is at most 
\begin{equation*}
\sfrac{2(1-\epsilon)}{\gamma(1-\epsilon)-1} \cdot \lbonly(V)
\end{equation*}
if $B=\emptyset$ and 
\[
2\cdot\LP \, + \sum_{v\in V\setminus V(B)}2y_v \, + \, 
\sfrac{2(1-\epsilon)}{\gamma(1-\epsilon)-1} \cdot \lb{V\setminus V(B)}
\]
otherwise.
\end{lemma}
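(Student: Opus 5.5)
We bound the $F$-edges with a charging argument. Each outer iteration's Step \smallcircled{3} adds $F[V(Z)]$ to $H$, and we charge this to the index $\ind(Z)$. The bound \eqref{eq:bound_on_Fi} on $\cost{F_i}$ then does the work, provided each index is charged at most once and $F[V(Z)]$ uses only circuits of index $\ind(Z)$ (plus, in the backbone case, index $0$).

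Fix one outer iteration with current $H$, subtour cover $F$ (after the circuit removals of the new Step \smallcircled{2}), and the final component $Z$ in Step \smallcircled{3}. Set $i:=\ind(Z)$.

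First, every circuit $C\in\Fscr$ with $E(C)\subseteq F[V(Z)]$ has $\ind(C)\ge i$, since $V(C)\subseteq V(Z)$ and $Z$ meets no $\tilde W_j$ with $j<i$. Conversely, a circuit with $\ind(C)=j>i$ lies in $Z$. I would argue that $Z$ contains every vertex of $\tilde W_j$ for $j>i$. The key point is that $Z$ has the largest index at termination, so any component with index $j>i$ has already been merged. More carefully, each $\tilde W_j$ lies in a single component of $(V,H)$, because $H\supseteq\tilde H$ and components only merge. Components with index larger than $i$ cannot coexist with $Z$ as separate components, since $Z$ was chosen with the largest index.

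Hence $F[V(Z)]$ consists of circuits of index $i$ together with circuits of larger index. So I would sharpen the charge as follows: a circuit of index $j$ is added to $H$ exactly in the iteration where its component is $Z$. At that point all of $\tilde W_j$, and the component of $H$ containing it, merges into one component of $(V,E(B)\cupp H)$ with index at most $i$.

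Next I would show that each index $j\ge1$ is charged in at most one iteration. After the iteration in which $Z$ (with $\ind(Z)=i$) is added, the component of $(V,E(B)\cupp H)$ containing $\tilde W_i$ now contains $V(Z)$. Its index stays $i$, but it is connected via $F\cupp X$ inside $Z$ to something of smaller index. This holds because Step \smallcircled{3} ends either when $\ind(Z)=0$, or, if the loop found no circuit, when $Z$ contains a component of smaller index; I would use the property of $F$ as a subtour cover that $F\cap\delta(W)\neq\emptyset$ for the component of $H$ containing $\tilde W_i$. Consequently the new component of $H$ has index $<i$, and every $\tilde W_j$ with $j\ge i$ inside $Z$ is thereafter in a component of index $<j$. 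Circuits of index $j$ in later subtour covers are then fully contained in components of $H$ only if they avoid other components, and the removal rule eliminates circuits inside a component. So I would argue that in later iterations any circuit $C$ with $\ind(C)=j$ has its minimum-index vertex set $\tilde W_j$ lying in a component of index smaller than $j$. Such a $C$ then still gets index $j$, so the charging must instead go through $\ind(Z)$: in later iterations $\ind(Z)$ is strictly smaller than the indices already used. Concretely, the sequence of values $\ind(Z)$ over iterations is strictly decreasing, and this monotonicity is the main obstacle to establish cleanly.

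Given the charging, the bound on the cost of the circuits with index at least $1$ is $\sum_{j\ge1}\cost{F_j}\le\frac{2(1-\epsilon)}{\gamma(1-\epsilon)-1}\lb{V\setminus V(B)}$ by \eqref{eq:bound_on_Fi}, using the disjointness of the $\tilde W_j$. If $B=\emptyset$ this is the full bound, with $\lb{V}$. Otherwise, the iteration with $\ind(Z)=0$ is the last one, and there $F[V(Z)]$ costs at most $c(F)\le 2\cdot\LP+\sum_{v\in V\setminus V(B)}2y_v$ by \eqref{eq:subtourcover_totalcost}. This gives the second bound.
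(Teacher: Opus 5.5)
Your bookkeeping matches the paper's. The index-$j$ circuits $F_j$ all lie in one component of $(V,E(B)\cupp H\cupp F\cupp X)$, so they are added together, and each such addition is bounded by \eqref{eq:bound_on_Fi}. The sets $\tilde W_j$ are disjoint, and the final iteration with $\ind(Z)=0$ is paid by \eqref{eq:subtourcover_totalcost}.

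The gap is the step you flag yourself: that no index $j\ge 1$ is charged in two different iterations. The route you sketch for it does not work.
\begin{itemize}
\item After Step \smallcircled{3}, the new component of $(V,E(B)\cupp H)$ has vertex set $V(Z)$ and index exactly $\ind(Z)$, not a smaller one.
\item The $X$-loop stops when no cheap local circuit crosses $\delta(V(Z))$, not when $Z$ reaches a component of smaller index.
\item The values $\ind(Z)$ need not decrease over iterations. A component of $H$ containing some $\tilde W_j$ with $j>\ind(Z)$ can lie outside $Z$, joined to a lower-index component only through $F\cupp X$-edges that are then not added to $H$. So your claim that $Z$ contains every $\tilde W_j$ with $j>\ind(Z)$ is false, and in the next iteration a new subtour cover can make such a component the new $Z$, with a larger index.
\end{itemize}
More fundamentally, the structural facts you invoke do not exclude recharging. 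These are: components only merge, circuits inside a component are discarded, and $F$ crosses every component. None of them rules out that a later subtour cover contains a circuit of index $i$ that starts in $\tilde W_i\subseteq V(Z^{t_1})$, leaves $V(Z^{t_1})$, and ends up in the new $Z$, so that $F_i$ is paid twice.

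What excludes this is a cost argument that uses both the termination rule of the $X$-loop and the new circuit-wise test in Step \smallcircled{2}. Suppose index-$i$ circuits are added at iterations $t_1<t_2$, and let $Z^{t_1}$ be the final $Z$ of iteration $t_1$.
\begin{itemize}
\item Then $\tilde W_i\subseteq V(Z^{t_1})$. Afterwards $V(Z^{t_1})$ lies in one component of $H$, so every surviving $C\in\Fscr_i$ at time $t_2$ is a local circuit with an edge in $\delta(V(Z^{t_1}))$.
\item Since the loop at $t_1$ stopped, $\cost{E(C)}>\slack{\tilde W_{\ind(Z^{t_1})},\tilde H}\ge\slack{\tilde W_i,\tilde H}$. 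This uses $\ind(Z^{t_1})\le i$ and the sorting of the $\tilde W_j$.
\item On the other hand, $\lbonly(v)\ge\gamma\cdot 2y_v$ and $\gamma\ge\frac{2}{1-\epsilon}$ give $\slack{V(C),E(C)}\ge\cost{E(C)}+\epsilon\cdot\lb{V(C)}$.
\item Hence $\slack{V(C),E(C)}>\slack{\tilde W_i,\tilde H}+\epsilon\cdot\lb{V(C)}$, contradicting \eqref{eq:circuitshavesmallslack} at iteration $t_2$.
\end{itemize}
This is exactly where the choice $\gamma\ge\frac{2}{1-\epsilon}$ and the new Step \smallcircled{2} check enter the lemma. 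Your proposal uses neither, so your at-most-once charging remains unproved.
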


\begin{proof}
Svensson's algorithm executes Step~\smallcircled{3} in each iteration; let us number these $1,\ldots,t^{\max}$.
Let $Z^t$ denote $Z$ at the end of the $t$-th execution of Step~\smallcircled{3}.
Let $\Fscr^t_i$ and $F^t_i$ denote $\Fscr_i$ and $F_i$ at that time if the edges in $F_i$ are then added to $H$, 
and let $F^t_i=\emptyset$ otherwise.

If $B\not=\emptyset$ and $F^t_0$ is nonempty for some $t$, then this is the very last execution $t=t^{\max}$, and 
we have $\cost{F^{t^{\max}}} \le 2\cdot\LP+\sum_{v\in V\setminus V(B)}2y_v$
by \eqref{eq:subtourcover_totalcost}. 

If $F^t_i$ is nonempty for some $i\ge 1$ and some $t\le t^{\max}$ with $F^t_0=\emptyset$, then 
$\cost{F^t_i} \le \sfrac{2(1-\epsilon)}{\gamma(1-\epsilon)-1} \cdot \lb{\tilde W_i}$ by \eqref{eq:bound_on_Fi}.
We claim that for any $i\ge 1$, at most one of the $F^t_i$ is nonempty.
Then summing over all $i$ and $t$ concludes the proof.

Suppose there are $t_1<t_2$ such that $F^{t_1}_i\not=\emptyset$ and $F^{t_2}_i\not=\emptyset$.
Then $F^{t_1}_i\subseteq E(Z^{t_1})$ and thus $\tilde W_i\subseteq V(Z^{t_1})$.
Moreover, each circuit in $\Fscr^{t_2}_i$ contains a vertex of $\tilde W_i$ and is not completely contained in $Z^{t_1}$  (such circuits would be discarded immediately).
Let $C$ be such a circuit.
See Figure~\ref{fig:astp_f_edges}.

If $|E(C)| \le \slack{\tilde W_{\mathrm{ind}(Z^{t_1})}, \tilde H}$,
this is a contradiction to reaching the end of execution $t_1$ with the component $Z^{t_1}$.
Otherwise, using $\mathrm{ind}(C) = i \ge \mathrm{ind}(Z^{t_1})$,
we get
\begin{align*}
\sum_{v\in V(C)} 2y_v \ = \ \cost{E(C)} &\ > \ \slack{\tilde W_{\mathrm{ind}(Z^{t_1})},\tilde H} \ \ge \ \slack{\tilde W_{\mathrm{ind}(C)},\tilde H}.
\end{align*}

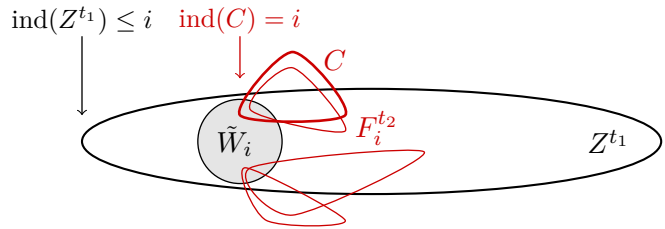
\begin{figure}[bh]
\begin{center}
 \begin{tikzpicture}[scale=0.7, line width=0.5]
  \tikzset{BackwardEdge/.style={->, >=latex, line width=1pt,darkgreen}}
  \tikzset{ForwardEdge/.style={->, >=latex, line width=1pt,darkred}}
  \tikzset{NeutralEdge/.style={->, >=latex, line width=1pt,gray}}

\draw[thick] (5.5,1) ellipse (5.5 and 1) {};
\node at (10,1) {$Z^{t_1}$};
\node at (0,3.3) {\small $\mathrm{ind}(Z^{t_1})\le i$};
\draw[->] (0,3) to (0,1.5);
\draw[fill=gray, fill opacity =0.2] (3,1) ellipse (0.8 and 0.8) {};
\node at (2.9,1) {$\tilde W_i$};

\draw[darkred] plot [smooth cycle] coordinates {(3.1,0.5) (3.5,-0.5) (5,-0.5) };
\draw[darkred] plot [smooth cycle] coordinates {(3.2,0.5) (4,-0.4) (6.5,0.8) };
\draw[darkred, line width=1] plot [smooth cycle] coordinates {(3,1.5) (5,1.5) (4,2.7) };
\draw[darkred] plot [smooth cycle] coordinates {(3.2,1.6) (4,2.4) (5,1.2) };
\node[darkred] at (4.8,2.55) {$C$};
\node[darkred] at (5.6,1.25) {$F^{t_2}_i$};
\node[darkred] at (3,3.3) {\small $\mathrm{ind}(C)=i$};
\draw[darkred,->] (3,3) to (3,2.2);

 \end{tikzpicture}
\end{center}
\vspace*{-4mm}
\caption{Proof of Lemma~\ref{lemma:Fedges}: An example of $F^{t_2}_i$ is shown in red.
Every circuit $C\in\Fscr^{t_2}_i$ contains an edge of $\delta(V(Z^{t_1}))$.
\label{fig:astp_f_edges}}
\end{figure}

Moreover, using $\lbonly(v)\ge\gamma \cdot 2y_v$ for all $v\in V(C)$ and $\gamma \ge \frac{2}{1-\epsilon}$,
we get
\begin{align*}
 \slack{V(C),E(C)} 
 \ &= \ \lb{V(C)} - \cost{E(C)} \\
 \ &\ge \ \epsilon \cdot \ell(V(C)) +  \bigl( (1-\epsilon)\gamma - 1 \bigr) \cdot \sum_{v\in V(C)} 2y_v \\
 \ &\ge \ \sum_{v\in V(C)} 2y_v + \epsilon \cdot \ell(V(C)).
\end{align*}
Together, we obtain
\begin{equation*}
 \slack{V(C),E(C)} \ >\ \slack{\tilde W_{\mathrm{ind}(C)},\tilde H} + \epsilon \cdot \lb{V(C)}.
\end{equation*}
a contradiction to \eqref{eq:circuitshavesmallslack} in execution $t_2$.
\end{proof}

\paragraph{Proof of Theorem~\ref{thm:main}.}
Recalling that the total cost of the edges in the initial $H=\tilde H$ plus the total cost of the $X$-edges added to $H$ 
is at most $\lb{V\setminus V(B)}$, we conclude that 
the returned edge set $H$ has total cost at most 
\begin{equation} \label{eq:boundwithgamma}
\Bigl( 1 + \sfrac{2(1-\epsilon)}{\gamma(1-\epsilon)-1} \Bigr) \cdot \lb{V\setminus V(B)} 
\end{equation}
if $B=\emptyset$, and $2\cdot\LP \, + \sum_{v\in V\setminus V(B)}2y_v$ more otherwise.
Recall that
$\lbonly(V) = \gamma \cdot \LP$ for \textsc{Asymmetric Graph TSP} and 
$\lb{V\setminus V(B)} \le (\gamma+\epsilon) \cdot  \sum_{v\in V\setminus V(B)}2y_v$ for general ATSP. 
Hence it suffices to set $\gamma$ and $\epsilon$ so that
\[
\Bigl( 1 + \sfrac{2(1-\epsilon)}{\gamma(1-\epsilon)-1} \Bigr) \cdot \bigl( \gamma+\epsilon \bigr) \ \le \ \alpha.
\]

Setting $\gamma=1+\sqrt{2}$ and $\epsilon=\min\Bigl\{\frac{1}{10},\,\sfrac{\alpha-3 - 2\sqrt{2}}{6} \Bigr\}$ does the job.
This concludes the proof of Theorem~\ref{thm:main}.

\paragraph{Remark.}
Theorem~\ref{thm:main} says that we have a $(2,4+2\sqrt{2}+\epsilon)$-algorithm vor vertebrate pairs, for any $\epsilon>0$.
Exercises 7.5--7.7 of \cite{TraV25} lead to an additional improvement of approximately 0.15.
The overall guarantee for \textsc{ATSP} is then less than 14.7.

\section{Final Remarks}

\subsection{Can the Bound Be Tight?}

In Step \smallcircled{3} of Svensson's algorithm, we check whether there is a local circuit $C$ with 
               \begin{itemize}
                \item $E(C)\cap\delta(V(Z)) \not=\emptyset$ and
                \item $\cost{E(C)} \le \slack{\tilde W_{\mathrm{ind}(Z)},\tilde H}$,
               \end{itemize}
          then add $E(C)$ to $X$. 
We could change the last condition to $\cost{E(C)} \le \sfrac{1}{\sqrt{2}-3\epsilon}\,\slack{\tilde W_{\mathrm{ind}(Z)},\tilde H}$,
and the proof of Lemma~\ref{lemma:Fedges} would still work.
Hence the bound is tight only if $\cost{\tilde H} \approx \lb{V\setminus V(B)}$, 
and $\slack{\tilde W_i,\tilde H}$ is (on average) much smaller than $\lb{\tilde W_i}$.

If a cycle $C\in\Fscr_i$ or a connected component of $(V,F_i)$ 
has much larger slack than $\slack{\tilde W_i,\tilde H}$,
then this leads to a better initialization.
Hence $\Fscr_i$ contains many cycles and many connected components in the worst case.
For example, pairs of cycles form a connected component.
However, we do not know how to exploit this.

\subsection{Path Version and Other Problems}

\paragraph{Path version.}
Theorem 9.23 of \cite{TraV25} says that a $(\kappa,\eta)$-algorithm for vertebrate pairs implies
an approximation guarantee of $\max\{3\kappa+\eta+2,\, 4\kappa+7\}$ (with respect to the natural LP relaxation) for the 
\textsc{Asymmetric Path TSP}.
We have $\kappa=2$ and $\eta<7$ by Theorem~\ref{thm:main}.
Hence we get an approximation ratio of 15, and an upper bound of 15 for the integrality ratio of the natural LP relaxation
of \textsc{Asymmetric Path TSP}. (The previously best bound from \cite{TraV25} was 17.)

For the unweighted special case, \textsc{Asymmetric Graph Path TSP}, 
we get a $(2\alpha-1)$ -approximation algorithm from any $\alpha$-approximation algorithm for ATSP (\cite{KohTV20}).
Hence we improve the approximation ratio for \textsc{Asymmetric Graph Path TSP} 
(and the upper bound on the integrality ratio of the natural LP relaxation)
from $15+\epsilon$ to less than $11$.

\paragraph{Other problems.}
We also get improved approximation guarantees for all problems where the currently best guarantee is obtained with
a black-box reduction to an ATSP approximation algorithm. 
Examples include asymmetric capacitated vehicle routing and the prize-collecting ATSP (cf.\ Chapter~17 of \cite{TraV25}).

\end{document}